\documentclass{article}

\usepackage[utf8]{inputenc}

\title{On the geometry of the kinematic space in special relativity}
\author{Rafael Ferreira \and João dos Reis Junior \and Carlos H.~Grossi}
\date{}

\usepackage{graphicx}
\usepackage{bbm}
\usepackage{latexsym}
\usepackage{fullpage}
\usepackage[a4paper, top=3cm, bottom=3cm,left=2cm,right=2cm]{geometry}
\usepackage{times}
\usepackage{amsfonts, amsmath, amsthm, amssymb}
\usepackage{wrapfig}
\usepackage{indentfirst}

\newtheoremstyle{thmstyle}
{\topsep}
{\topsep}
{\itshape}
{0pt}
{\bfseries}
{.}
{5pt plus 1pt minus 1pt}
{#2.\hspace{3pt}#1#3}

\newtheoremstyle{defistyle}
{\topsep}
{\topsep}
{}
{0pt}
{\bfseries}
{.}
{5pt plus 1pt minus 1pt}
{#2.\hspace{3pt}#1#3}

\numberwithin{equation}{subsection}

\theoremstyle{thmstyle}

\newtheorem{lemma}[equation]{Lemma}
\newtheorem{prop}[equation]{Proposition}
\newtheorem{cor}[equation]{Corollary}
\theoremstyle{defistyle}
\newtheorem{defi}[equation]{Definition}
\newtheorem{rmk}[equation]{Remark}
\theoremstyle{defistyle}
\newtheorem{example}[equation]{Example}
\setcounter{secnumdepth}{2}

\DeclareMathOperator{\arccosh}{arccosh}

\DeclareMathOperator{\ta}{ta}
\DeclareMathOperator{\T}{T}
\DeclareMathOperator{\G}{G}

\begin{document}

\maketitle

\begin{abstract}
The classifying space of inertial reference frames in special relativity is naturally hyperbolic. There is a remarkable interplay between central elements of hyperbolic geometry and those of special relativity --- which, to a certain extent, have already been observed in the past --- that we present and further discuss in the paper. We aim at a geometrization of special relativity at the level of kinematic space by giving to physical concepts/phenomena purely geometric definitions/descriptions. In this way, the differences between special relativity and classical mechanics can be seen as a manifestation of the distinct geometric natures of their kinematic spaces.
\end{abstract}

\section{Introduction}

A major conceptual difference between Newtonian mechanics and special relativity is that the kinematic space $\mathcal K$ of the first is Euclidean\footnote{It would be more accurate to say that it is just a vector space (with no distinguished metric), see \cite{pereiragrossi2021}.} while that of the former is hyperbolic, a fact already observed by V.~Vari\'cak in 1910 \cite{Var1910} and E.~Borel in 1913 \cite{borel2013_1}, \cite{borel2013_2}. Here, {\it kinematic space\/} is to be understood as the classifying space of all inertial reference frames (see Subsection \ref{sub:kinematicspace}).

The hyperbolic nature of special relativity has been explored by several authors from distinct perspectives. Some are based on the role played by {\it rapidity,} introduced by Vari\'cak and called {\it true velocity\/} by E.~Borel. Rapidity appears naturally in the context of special relativity because it is simply the hyperbolic distance between inertial reference frames, that is, it is the hyperbolic distance in $\mathcal K$. Another hyperbolic view on special relativity involves the use of gyrovector spaces, introduced by A.~Ungar (see, for instance, \cite{ungar2001}), which constitute an algebraic framework for hyperbolic geometry that builds upon an axiomatization of the (noncommutative and nonassociative) relativistic velocity addition.

The path we take in this paper focuses on some simple geometric invariants related to finite configurations of points in kinematic space. (It comes mainly from \cite{AGr2011}, where a coordinate-free toolbox that suits several ``classic'' geometries~--- including, for instance, hyperbolic, spherical, Fubini-Study, de Sitter, and anti de Sitter geometries --- is developed.) A~first example of such a geometric invariant is the \textit{tance\/} (see \eqref{eq:tance} for the definition) which is, in a certain sense, the simplest algebraic invariant of a pair of points in $\mathcal K$. The square root of the tance is a fundamental quantity in hyperbolic geometry because distance is a monotonic function of it. Curiously, when translated into the context of special relativity, the square root of the tance between two inertial observers in $\mathcal K$ is simply the Lorentz factor related to the observers (see Remark \ref{rmk:velocity}). Keeping up with this idea of translating into special relativity some natural concepts and geometric invariants in hyperbolic geometry, we obtain the following:
\begin{itemize}
\item The relative velocity between inertial observers $\pmb p,\pmb q\in\mathcal K$ appears as a natural algebraic expression for the tangent vector to the geodesic segment joining $\pmb p,\pmb q$ (see Definition \ref{defi:relative_velocity});
\item Rapidity and the closely related concept of scaled rapidity are shown to have distinct geometric origins; while rapidity measures the hyperbolic distance between inertial reference frames, scaled rapidity measures the hyperbolic distance between relative velocities (see Section \ref{sub:rapidityandvelocity});
\item Parallel transport gives rise to the relativistic velocity addition in a straightforward generalization of the classical velocity addition (see Definition \ref{defi:sum_of_rapidities});
\item Hypercycles (that is, curves equidistant from a geodesic in $\mathcal K$) allow one to write a ``parallelogram law'' for the relativistic velocity addition (see the end of Subsection \ref{sub:rapidityandvelocity});
\item The general relativistic Doppler effect can be described by a natural expression involving the Busemann function related to a photon or, equivalently, to a point in the ideal boundary of $\mathcal K$ (see Proposition \ref{prop:Doppler}); moreover, horospheres appear as level surfaces of energy/frequency (see Corollary \ref{cor:level_curves}). There is a striking resemblance between such geometric form of the relativistic Doppler effect and the study of probability measures in the context of Patterson-Sullivan theory (see \cite[Section 1.2 and Proposition 3.9]{quint2006} for the Patterson-Sullivan perspective);
\item A basic algebraic invariant involving two inertial observers in $\mathcal K$ and a pair of space-like separated events determines whether the observers agree or disagree on the order of occurrence of the events (see Subsection \ref{sub:causality});
\item Curves in $\mathcal K$ can be seen as describing the inertial reference frames occupied by an observer at each instant of its proper time and a tangent vector to such a curve gives the instantaneous $4$-acceleration of the observer. Hence, dynamics can also be modelled at the level of the kinematic space (see Subsection \ref{sub:dynamics}).
\end{itemize}

We arrive at what seems to be an effective geometrization of special relativity: physical concepts and phenomena (like the Lorentz factor, velocity, velocity addition, the Doppler effect, among others) gain a purely geometric description which does not depend on their actual definitions in physics. Moreover, the techniques that are used in the paper directly extended to Grassmannians \cite{AGr2012}, \cite{ananingoncalvesgrossi2019} and this allows one to deal in a similar fashion with special relativity in other Einstein geometries like anti de Sitter and de Sitter spacetimes.

It is worthwhile mentioning that, in our construction, kinematic space is naturally compactified by the de Sitter space as they are are glued along their common ideal boundaries. The interplay between these geometries, which are linked by the geometry of Minkowski space, is very rich. For instance, in the case of $4$-dimensional Minkowski space, there is a duality between points in the de Sitter component (which correspond to the sometimes called tachyonic inertial reference frames) and circles in the ideal boundary (which correspond to families of photons whose velocities, as measured by certain inertial observers, are all coplanar), see Remark \ref{rmk:tachyon}.

In spite of the emphasis we give on the geometric point of view, the synthetic and coordinate-free methods that we use provide simple explicit formulae for all the involved concepts (say, geodesics, parallel transport, Riemannian connection, curvature tensor, among others \cite{AGr2011}). These methods are essentially ``linear'' and they are also applicable to several other geometries which are common in physics; in this regard, see Subsection \ref{sub:classicgeometries} and Example~\ref{example:classicgeometries}.

Finally, developing a similar approach to classical mechanics requires one to take as spacetime a vector space equipped with a \emph{degenerate} symmetric bilinear form (of signature $0++\dots+$) in place of Minkowski space \cite{pereiragrossi2021}. In a certain sense, special relativity and classical mechanics arise from their kinematic spaces in the same way; however, being very different from each other, the geometric natures of such kinematic spaces give rise to completely distinct phenomenologies.

\setcounter{equation}{0}

\section{Preliminaries}\label{sec:preliminaries}
\subsection{Kinematic space}\label{sub:kinematicspace}
Let $\mathbb M^{n+1}$ be Minkowski $(n+1)$-space, that is, an $\mathbb R$-vector space equipped with a symmetric bilinear form $\langle-,-\rangle:\mathbb M^{n+1}\to\mathbb R$ of signature $-+\dots+$. As usual, the \textit{light cone\/} consists of the \textit{lightlike\/} vectors $v\in\mathbb M^{n+1}$ which satisfy $v\ne0$ and $\langle v,v\rangle=0$. Minkowski space is divided by the light cone into \textit{timelike\/} and \textit{spacelike\/} vectors, respectively characterized by $\langle v,v\rangle<0$ and $\langle v,v\rangle>0$. We also assume that one of the light cone sheets is chosen as the future light cone.

The $1$-dimensional subspace $\mathbb Rv\subset\mathbb M^{n+1}$, where $v$ is a timelike vector, can be seen as the worldline of an inertial reference frame. The space of all such worldlines consists of an open subspace of the real projective space $\mathbb P^n_\mathbb R$ and, topologically, this subspace is an open $n$-ball called the (open) \textit{kinematic space\/} $\mathcal K$. The boundary $\partial\mathcal K$ of $\mathcal K$ is an $(n-1)$-sphere consisting of the projectivization of the light cone; in other words, each point in $\partial\mathcal K$, an \textit{isotropic\/} point, represents the worldline of a photon. We call $\overline{\mathcal K}:=\mathcal K\cup\partial\mathcal K$ the \textit{closed\/} kinematic space and the entire projective space, the \textit{extended\/} kinematic space. Moreover, we denote by $\mathcal G$ the complement $\mathbb P_\mathbb R^n\setminus\overline{\mathcal K}$.

A point in projective space will be denoted by a bold letter and a representative of this point in Minkowski space, by the same roman letter; so, $\pmb p\in\mathbb P^n _\mathbb R$ stands for the equivalence class $\mathbb Rp$ of a point $p\in\mathbb M^{n+1}$. Strictly speaking, the points in kinematic space represent the worldlines of inertial observers that synchronised their clocks at a same point in spacetime (the vertex of the lightcone which corresponds to coordinate time $t=0$ for every inertial observer). By choosing a representative $p\in\mathbb M^{n+1}$ of a point $\pmb p\in\mathcal K$, we therefore pick a specific coordinate time $t=\pm|p|/c$ in the frame of the corresponding inertial observer ($c$ denotes the speed of light in vacuum). However, we will typically abuse nomenclature and refer to a point in $\mathcal K$ simply as an inertial observer (or inertial reference frame).

\begin{rmk}\label{rmk:three_frames}
When dealing with $3$ inertial reference frames or, equivalently, with three points in $\mathcal K$ (a configuration that will be considered several times in the paper), we can assume that $n=2$ because the vector space generated by these frames (equipped with the induced form) is precisely $\mathbb M^3$. In this case, the extended kinematic space is the real projective plane $\mathbb P_\mathbb R^2$ and the worldlines corresponding to photons give rise to a topological circle $\mathbb S^1$ which divides $\mathbb P^2 _\mathbb R$ into the open disk $\mathcal K$ and the open M\"obius band $\mathcal G=\mathbb P_\mathbb R^2\setminus\overline{\mathcal K}$.
\end{rmk}

{\bf Tangent space and metric.} The symmetric bilinear form in $\mathbb M^{n+1}$ canonically induces a Riemannian metric in the open kinematic space $\mathcal K$ as well as a Lorentzian metric in $\mathcal G$. Indeed, there is a natural identification
\begin{equation}\label{eq:tangentspace}
\mathrm{T}_{\pmb p}\mathbb P_\mathbb R^n=\mathrm{Lin}(\mathbb Rp,p^\perp)
\end{equation}
\noindent
between the tangent space to $\mathbb P_\mathbb R^n$ at a nonisotropic point $\pmb p\in\mathbb P_\mathbb R^n$ and the space of linear maps from $\mathbb Rp$ to its orthogonal complement $p^\perp$ with respect to the symmetric bilinear form. This identification may be interpreted in the following way (for a formal proof see, for instance, \cite[Subsection A.1.1]{AGG2011}). A tangent vector $\varphi\in\mathrm{T}_{\pmb p}\mathbb P_\mathbb R^n$ can be seen as representing a movement in its direction. When the point $\pmb p$ starts moving in the direction of $\varphi$, the corresponding subspace $\mathbb Rp$ rotates around the origin of $\mathbb M^{n+1}$ and such a rotation can be described in terms of a linear map $\mathbb Rp\to p^\perp$ as in Figure 1.

\begin{wrapfigure}{L}{0pt}\label{fig:tangent_vector}
\includegraphics[width=.2\textwidth]{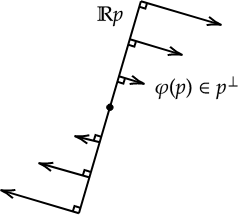}
\caption{Tangent vector}
\end{wrapfigure}

In view of the identification \eqref{eq:tangentspace}, given tangent vectors $\varphi_1,\varphi_2\in\mathrm{Lin}(\mathbb Rp,p^\perp)$ at a non-isotropic point $\pmb p\in\mathbb P_\mathbb R^n$, we define
\begin{equation}\label{eq:metric}
\langle \varphi_1,\varphi_2\rangle_{\pmb p}:=-\frac{\big\langle \varphi_1(p),\varphi_2(p)\big\rangle}{\langle p,p \rangle}. \end{equation}
This provides a semi-Riemannian metric in extended kinematic space outside isotropic points (note that the above formula does not depend on the choice of the representative for $\pmb p$).
This metric is actually Riemannian in the open kinematic space $\mathcal K$ because, in this case, the symmetric bilinear form, restricted to $p^\perp$, is positive-definite. It is called the \textit{hyperbolic\/} metric and endows $\mathcal K$ with a geometric structure equivalent to Klein's model of the hyperbolic $n$-ball. One can similarly see that \eqref{eq:metric} is a Lorentzian metric in $\mathcal G$, called the \textit{de Sitter\/} metric. The extended kinematic space is therefore the gluing, along isotropic points, of the kinematic space with the de Sitter space. In order to explore the interplay between $\mathcal K$ and $\mathcal G$, we need to introduce (extended) geodesics.

\smallskip

{\bf Extended geodesics and duality.}
An \textit{extended geodesic\/} is a projective line, that is, the projectivization $\mathbb P_\mathbb RW$ of a $2$-dimensional linear subspace $W\subset\mathbb M^{n+1}$. In particular, there exists a unique extended geodesic, denoted by $\mathrm{G}{\wr}\pmb p,\pmb q{\wr}$, that contains a pair of distinct points $\pmb p,\pmb q\in\mathbb P_\mathbb R^n$. Topologically, an extended geodesic is always a circle. The intersection of $\mathbb P_\mathbb RW$ with $\mathcal K$ (respectively, with $\mathcal G$) is, if non-empty, a usual geodesic in hyperbolic space (respectively, in de Sitter space). Moreover, all the geodesics in hyperbolic space, as well as in de Sitter space, appear in this way \cite{AGr2011}. The possible signatures of the symmetric bilinear form restricted to $W$ are $-+$, $+0$, and $++$. The first case provides all the geodesics in $\mathcal K$ and it is easy to see that each such geodesic has a pair of isotropic points, called its \textit{vertices.} In the case of de Sitter space, all the admissible signatures of $W$ appear: when $W$ is respectively of signatures $-+$, $0+$, or $++$, the corresponding geodesics have spacelike, lightlike, or timelike tangent vectors with respect to the Lorentzian metric \eqref{eq:metric}. Moreover, a geodesic has a pair of distinct isotropic vertices in the first case, a single isotropic vertex in the second case, and no isotropic points in the last case.

We can now see, by means of a simple duality, that the de Sitter space is nothing but the space of all geodesics in kinematic space when $n=2$. Indeed, given a point $\pmb p\in\mathcal G$, we obtain the geodesic $\mathbb P_\mathbb Rp^\perp\cap\mathcal K$ due to $p^\perp$ being of signature $-+$. The point $\pmb p$ is called the {\it polar point\/} of the geodesic $\mathbb P_\mathbb Rp^\perp\cap\mathcal K$. Conversely, given a geodesic $\mathbb P_\mathbb RW\cap\mathcal K$, we obtain the point $\mathbb P_\mathbb RW^\perp\in\mathcal G$. (Clearly, the kinematic space itself can be seen as the space of all timelike geodesics in $\mathcal G$ and the extended kinematic space, as the space of all geodesics in $\mathcal G$.) For arbitrary $n$, the de Sitter space is the space of all totally geodesic \textit{hyperplanes\/} in the kinematic space (a totally geodesic hyperplane in $\mathcal K$ is given by $\mathbb P_\mathbb RW\cap\mathcal K$ when $W$ is a codimension $1$ linear subspace of $\mathbb M^{n+1}$ of signature $-+\dots+$).

\begin{rmk}\label{rmk:tachyon}
Let $n=3$. Given $\pmb p\in\mathcal G$, the totally geodesic plane $P:=\mathbb P_\mathbb Rp^\perp\cap\mathcal K$ intersects the ideal boundary $\partial\mathcal K$ in a circle $C$. It follows from Definition \ref{defi:relative_velocity}, Proposition \ref{prop:velocity_is_velocity}, and from the fact that $P$ is totally geodesic that  any observer in $P$ agrees that the velocities of the photons corresponding to the points in the circle $C$ are coplanar. In other words, under the mentioned duality, one can see an inertial ``reference frame'' corresponding to a point in $\mathcal G$ (sometimes called a tachyonic worldline) as being equivalent to such a family of photons.
\end{rmk}

{\bf Tance.}
The length of the geodesic segment joining two inertial reference frames $\pmb p,\pmb q\in\mathcal K$ is the hyperbolic distance $d(\pmb p,\pmb q)$ between $\pmb p$ and $\pmb q$. It is given by $d(\pmb p,\pmb q)=\arccosh\sqrt{\ta(\pmb p,\pmb q)}$, where \begin{equation}\label{eq:tance}
\ta(\pmb p,\pmb q):=\frac{\langle p,q\rangle\langle q,p\rangle}{\langle p,p\rangle\langle q,q\rangle}
\end{equation}
is the \textit{tance\/} between $\pmb p,\pmb q$ \cite{AGr2011}. (In the next subsection, we will also refer to the tance in the case of a non-degenerate Hermitian form in a complex vector space; this is why we write its definition in this way.) The hyperbolic distance, also known in the context of special relativity as \textit{rapidity,} is therefore a monotonic function of (the square root of) the tance.

In a certain way, (the square root of) the tance can be seen as being more fundamental than the distance: it is the simplest algebraic invariant of two non-isotropic points in projective space while the distance involves applying to such algebraic invariant a transcendental function. Unlike the distance, the tance is well-defined for any pair of non-isotropic points. For instance, in view of the above duality, the tance in $\mathcal G$ allows to determine the relative position of the dual hyperplanes (or geodesics, when $n=2$) in $\mathcal K$ and to calculate the corresponding Riemannian quantities (distances and angles between hyperplanes). Similarly, the tance between a point $\pmb p\in\mathcal K$ and a point $\pmb q\in\mathcal G$ allows to calculate the distance between $\pmb p$ and the dual hyperplane $\mathbb P_\mathbb Rq^\perp\cap\mathcal K$. Curiously, the square root of the tance is exactly the Lorentz factor $\gamma$ corresponding to a pair of inertial reference frames $\pmb p,\pmb q$ (see Subsection \ref{sub:tance_lorentz}).

\smallskip

{\bf Isometries.} The restricted Lorentz group $\mathrm{SO}^+(1,n)$ of all linear, orientation and future-preserving isometries of $\mathbb M^{n+1}$ naturally acts on $\mathcal K$ by orientation-preserving isometries ($\mathrm{SO}^+(1,n)$ is in fact isomorphic to the group $\mathrm{PSO}(1,n)$ of orientation-preserving isometries of $\mathcal K$). The non-identical orientation-preserving isometries of $\mathcal K$ can be \textit{elliptic,} \textit{parabolic,} or \textit{hyperbolic.} Consider $n=2$. In this case, the elliptic isometries have exactly one fixed point (its center) in~$\mathcal K$ and, geometrically, they are rotations around the center. The orbit of a point under a one-parameter group generated by an elliptic isometry is a \textit{metric circle,} that is, a locus equidistant from the center. A parabolic isometry has a unique isotropic fixed point $\pmb v$ and the orbit of a point under a one-parameter group generated by such an isometry is a \textit{horocycle,} that is, a curve containing $\pmb v$ that is orthogonal to every geodesic that has $\pmb v$ as a vertex.

Finally, a hyperbolic isometry $I$ has exactly a pair of fixed isotropic points $\pmb v_1,\pmb v_2$. The geodesic $G:=\G{\wr}v_1,v_2{\wr}\cap\mathcal K$ is $I$-stable and, moreover, the orbit of a point under a one-parameter group generated by $I$ is a \textit{hypercycle,} that is, a locus equidistant from $G$. Note that, at the level of Minkowski space, $I$ is what is called a \textit{boost.} Indeed, the geodesic $G$ can be interpreted as a family of inertial observers such that any of these observers sees all the others with velocities in a same direction (see Subsection \ref{sub:rapidityandvelocity}). Now, given inertial observers $\pmb p,\pmb q\in G$ and a hyperbolic isometry stabilizing $G{\wr}\pmb p,\pmb q{\wr}$, the relative velocity between $\pmb p,\pmb q$ and that between $\pmb p,\pmb I(\pmb q)$ (as measured by $\pmb p$) have the same direction.

As we will see, elliptic, hyperbolic, and parabolic isometries play a major role respectively in the Wigner rotation, the relativistic velocities addition, and the Doppler effect.

\smallskip

The above construction endowing (open subspaces of) the projective space with a geometric structure arising from a non-degenerate form on a vector space does not depend on the choice of the signature of the form nor on the field of real numbers. In fact, many other geometries that are relevant in physics can be approached in this manner. This includes Fubini-Study geometries (quantum information theory), anti-de Sitter space (adS/CFT correspondence), and complex hyperbolic geometry (complex Minkowski space). For this reason, in what follows, we will briefly discuss how the above works in more general settings.

\setcounter{equation}{0}

\subsection{Classic geometries}\label{sub:classicgeometries}
Let $V$ be an $(n+1)$-dimensional $\mathbb K$-vector space, where $\mathbb{K}$ is either $\mathbb{R}$ or $\mathbb{C}$ (it is also possible to take a module over the quaternions in place of $V$, see \cite{AGr2011}). We endow $V$ with a nondegenerate symmetric bilinear (respectively, Hermitian) form $\langle -,- \rangle : V \times V \rightarrow \mathbb{K}$ when $\mathbb K=\mathbb R$ (respecitvely, $\mathbb K=\mathbb C$). As in the previous subsection, we will denote by $\pmb p$ a point in projective space $\mathbb P_\mathbb KV$ and by $p\in V\setminus\{0\}$ a representative of $\pmb p$.

The \textit{signature\/} of a point $\pmb p\in\mathbb{P}_{\mathbb{K}}V$ is the sign of $\langle p,p\rangle$ (which can be $-$, $+$, or $0$). The signature is well defined because $\langle kp,kp\rangle=|k|^2\langle p,p\rangle$ for all $0\neq k\in\mathbb K$. It divides $\mathbb P_\mathbb KV$ into \textit{negative,} \textit{positive,} and \textit{isotropic\/} points:
$$\text BV:=\{\pmb p \in \mathbb P_{\mathbb K}V \mid \langle{p,p}\rangle < 0 \}, \qquad\text{E}V:=\{\pmb p \in \mathbb P_{\mathbb K}V \mid \langle{p,p}\rangle > 0 \}, \qquad\text{S}V:=\{\pmb p \in \mathbb P_{\mathbb K}V \mid \langle{p,p}\rangle = 0 \}.$$
The space $\text{S}V$ of isotropic points is called the \textit{absolute.} Note that $\mathcal K$, $\partial\mathcal K$, and $\mathcal G$ in the previous subsection, where $V$ is taken as the Minkowski space $\mathbb M^{n+1}$, correspond respectively to $\mathrm{B}V$, $\mathrm{S}V$, and $\mathrm{E}V$.

Let $\pmb p\in\mathbb{P}_{\mathbb{K}}V \setminus\text{S}V$ be a nonisotropic point. Then
$$V=\mathbb{K}p\oplus p^{\perp},\qquad v=\pi[\pmb p]v+\pi^{\prime}[\pmb p]v$$
where
\begin{equation}\label{eq:projectors}
\pi[\pmb p]:v\mapsto v-\frac{\langle v,p\rangle}{\langle p,p\rangle}p\in p^\perp,\qquad\pi^{\prime}[\pmb p]:v\mapsto\frac{\langle v,p\rangle}{\langle p,p \rangle}p\in \mathbb Kp
\end{equation}
are the orthogonal projectors.

As in \eqref{eq:tangentspace}, we have a natural identification $\mathrm{T}_{\pmb p}\mathbb{P}_{\mathbb{K}}V \simeq \text{Lin}_{\mathbb K}(\mathbb{K}p, p^\perp)$ of the tangent space to $\mathbb{P}_{\mathbb{K}}V$ at a nonisotropic point $\pmb p$ with the space of $\mathbb K$-linear maps from $\mathbb Kp$ to $p^\perp$. Using this identification, we define the pseudo-Riemannian metric
\begin{equation}\label{eq:metric2}
\langle\varphi_1,\varphi_2\rangle_{\pmb p}:=\pm\mathrm{Re}\frac{\big\langle \varphi_1(p),\varphi_2(p)\big\rangle}{\langle p,p \rangle}
\end{equation}
where $\pmb p$ is a nonisotropic point and $\varphi_1,\varphi_2\in\mathrm{T}_{\pmb p}\mathbb P_\mathbb KV$. Clearly, when $\mathbb K=\mathbb C$, this pseudo-Riemannian metric comes from a Hermitian metric (simply do not take the real part in the above expression; the imaginary part of the Hermitian metric is the K\"ahler form).

Let $W$ be a $2$-dimensional real linear subspace $W\subset V$ such that the restriction of the form to $W$ is non-null; in the complex case, we also require the Hermitian form restricted to $W$ to be real. The projectivization $\mathbb P_\mathbb KW$ is called an \textit{extended\/} geodesic (note that, in the complex case, we take the \textit{complex\/} projectivization of the \textit{real\/} subspace $W$). Extended geodesics are always topological circles and their intersections with $\mathrm{B}V$ and $\mathrm{E}V$ provide all the usual geodesics of the corresponding (pseudo-)Riemannian metric connection \cite{AGr2011}.

\begin{example}\label{example:classicgeometries}
Besides the extended (real) hyperbolic space constructed in Subsection \ref{sub:kinematicspace} (a hyperbolic ball glued with de Sitter space along their absolutes), we point out a few other examples:
\begin{itemize}
    \item Let $\mathbb{K}=\mathbb{C}$, let $-++\dots+$ be the signature of the Hermitian form $\langle{-,-}\rangle$, and take the sign $-$ in \eqref{eq:metric2}. In this case, $\text{B}V=:\mathbb H^n_\mathbb C$ is the \textit{complex hyperbolic} space. Complex hyperbolic space is to complex Minkowski space as the real hyperbolic space is to real Minkowski space. Note that, when $\dim_\mathbb CV=2$, both $\mathrm{B}V$ and $\mathrm{E}V$ are Poincar\'e hyperbolic discs isometric to the kinematic space $\mathcal K$ (see Subsection \ref{sub:kinematicspace}).
    \item Let $\mathbb{K}=\mathbb{R}$, let $--+\dots+$ be the signature of the symmetric bilinear form of $\langle{-,-}\rangle$, and take the $-$ sign in \eqref{eq:metric2}. Now, $\text{B}V=:ad\mathbb S^n$ is the \textit{anti-de Sitter\/} space (which appears, say, in relativity and in the adS/CFT correspondence). Note that there is a natural map $ad\mathbb S^{2n+1}\to\mathbb H_\mathbb C^n$, the \textit{anti-Hopf fibration\/}: when $V$ is an $(n+1)$-dimensional complex vector space with a Hermitian form of signature $-+\dots+$, its decomplexification is a $2(n+1)$-dimensional real vector space with a symmetric bilinear form of signature $--+\dots+$ (the real part of the Hermitian form); the fibers of the map $\mathbb P_\mathbb RV\to\mathbb P_\mathbb CV$, $\mathbb Rp\mapsto\mathbb Cp$, are circles. In particular, the fibration $ad\mathbb S^3\to\mathbb H^1_\mathbb C$ can be relevant to special relativity (see the previous item).
    \item Let $\mathbb{K} = \mathbb{C}$, let $+...+$ be the signature of the Hermitian form $\langle{-,-}\rangle$ and take the sign $+$ in \eqref{eq:metric2}. In this case, we obtain the \textit{Fubini-Study\/} metric on the complex projective space $\mathrm{E}V=\mathbb P_\mathbb CV$. The Fubini-Study metric is widely used in the geometry of quantum information (the Bloch sphere corresponds to the case $\dim_\mathbb CV=2$).
    \end{itemize}
\end{example}

Following this approach, it is possible to express many other important (pseudo-)Riemmannian concepts (say, curvature tensor, metric connection, parallel transport) in a similar coordinate-free fashion \cite{AGr2011}. Moreover, all the geometries obtained in this way, including their natural generalization to grassmannians, are Einstein manifolds~\cite{AGr2012}.

\section{The physics of kinematic space}\label{sec:physicsofkinematicspace}
\subsection{Tance and Lorentz factor}\label{sub:tance_lorentz}

Let us first describe the Lorentz factor, the time dilation, and the length contraction at the level of the kinematic space~$\mathcal K$ introduced in Subsection \ref{sub:kinematicspace}.

\smallskip

Let  $\pmb p,\pmb q\in\mathcal K$ be inertial reference frames, and let $p$ be an event that happened at time $t_0=|p|/c$ for $\pmb p$. Hence, $p$ happened at time $t=\big|\pi'[\pmb q]p\big|/c$ for $\pmb q$ and we obtain
\begin{equation*}
\frac t{t_0}=\sqrt{\frac{\Big\langle\pi'[\pmb q]p,\pi'[\pmb q]p\Big\rangle}{\langle p,p\rangle}}=\sqrt{\frac{\Big\langle\frac{\langle p,q\rangle q}{\langle q,q\rangle},\frac{\langle p,q\rangle q}{\langle q,q\rangle}\Big\rangle}{\langle p,p\rangle}}=\sqrt{\ta(p,q)}=:\gamma_{\pmb p,\pmb q},
\end{equation*}
where $\ta(\pmb p,\pmb q)$ is the tance defined in \eqref{eq:tance}. Clearly, $\gamma_{\pmb p,\pmb q}$ is the usual Lorentz factor and $t=\gamma_{\pmb p,\pmb q}t_0$ is nothing but the time dilation (see Proposition \ref{prop:timedilation}).

\begin{rmk}\label{rmk:velocity}
The usual formula for the Lorentz factor in terms of the relative scalar velocity between\/ $\pmb p,\pmb q\in\mathcal K$ can be obtained as follows. Take homogeneous coordinates\/ $[c,x_1,\dots,x_n]$ with\/ $\sum x_i^2\leqslant c^2$ that identify the closed kinematic space with a closed\/ $n$-ball\/ $\overline{\mathbb B}^n$ of radius\/ $c$ centred at\/ $\pmb p=[c,0,0,\dots,0]$. Then, if\/ $\pmb q=[c,v_1,\dots,v_n]$, the relative scalar velocity\/ $v$ between\/ $\pmb p$ and\/ $\pmb q$ is given by the Euclidean distance in $\overline{\mathbb B}^n$ between the observers, that is, $v=\sqrt{\sum v_i^2}$. Hence, we have
\begin{equation*}
\gamma_{\pmb p,\pmb q}=\sqrt{\ta(p,q)}=\sqrt{\frac{c^4}{-c^2(-c^2+\sum v_i^2)}}=\frac1{\sqrt{1-\frac{v^2}{c^2}}}.
\end{equation*}
In particular, in terms of the tance, the relative scalar velocity between $\pmb p,\pmb q$ is given by
\begin{equation}\label{eq:velocity_tance}
    v=c\,\sqrt{1-\frac1{\ta(\pmb p,\pmb q)}}.
\end{equation}
(For a coordinate-free form of this remark, see Subsection \ref{sub:rapidityandvelocity}.)
\end{rmk}

\begin{prop}[ (time dilation)]\label{prop:timedilation}
Let\/ $\pmb p,\pmb q\in\mathcal K$ be inertial observers and let\/ $w\in\mathbb M^{n+1}\setminus\{0\}$ be an event that happened at time\/ $t_{\pmb p}\ne0$ for\/ $\pmb p$ and at time\/ $t_{\pmb q}$ for\/ $\pmb q$. Then
$$\frac{t_{\pmb q}}{t_{\pmb p}}=\sqrt{\frac{\ta(\pmb q,\pmb w)}{\ta(\pmb p,\pmb w)}}.$$
{\rm(}Note that the formula is also well-defined when\/ $w$ is lightlike because the term\/ $\langle w,w\rangle$ cancels out.{\rm)} In particular, when\/ $\pmb w=\pmb p$, we obtain\/ $t_{\pmb q}=\gamma_{\pmb q,\pmb p}t_{\pmb p}$.
\end{prop}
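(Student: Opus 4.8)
The plan is to compute the ratio $t_{\pmb q}/t_{\pmb p}$ directly from the definition of coordinate time in each frame. Recall from the setup preceding the proposition that an event $w\in\mathbb M^{n+1}$ happens, for an inertial observer $\pmb r\in\mathcal K$, at a time determined by the component of $w$ along the worldline $\mathbb Rr$; concretely, the time for $\pmb r$ is $\pm\big|\pi'[\pmb r]w\big|/c$, where $\pi'[\pmb r]$ is the orthogonal projector onto $\mathbb Rr$ given by \eqref{eq:projectors}. So first I would write $t_{\pmb p}=\pm\big|\pi'[\pmb p]w\big|/c$ and $t_{\pmb q}=\pm\big|\pi'[\pmb q]w\big|/c$, and take the ratio so that the factor of $c$ cancels.

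Next I would evaluate $\big|\pi'[\pmb r]w\big|^2$ for a generic frame $\pmb r$. Using $\pi'[\pmb r]w=\dfrac{\langle w,r\rangle}{\langle r,r\rangle}r$, we get
\begin{equation*}
\big|\pi'[\pmb r]w\big|^2=\left|\frac{\langle w,r\rangle}{\langle r,r\rangle}\right|^2\big|\langle r,r\rangle\big|=\frac{\langle w,r\rangle\langle r,w\rangle}{\big|\langle r,r\rangle\big|},
\end{equation*}
where I have used that $\langle r,r\rangle<0$ for $\pmb r\in\mathcal K$ and that $\langle w,r\rangle$ is real. Multiplying and dividing by $\langle w,w\rangle$ lets me recognize the tance \eqref{eq:tance}: indeed $\langle w,r\rangle\langle r,w\rangle/\big(\langle r,r\rangle\langle w,w\rangle\big)=\ta(\pmb r,\pmb w)$, so that $\big|\pi'[\pmb r]w\big|^2=\ta(\pmb r,\pmb w)\cdot\big|\langle w,w\rangle\big|$. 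The crucial point is that the factor $\big|\langle w,w\rangle\big|$ depends only on the event $w$ and not on the observer, so it will cancel in the ratio.

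Applying this with $\pmb r=\pmb p$ and $\pmb r=\pmb q$ and forming the quotient gives
\begin{equation*}
\frac{t_{\pmb q}}{t_{\pmb p}}=\sqrt{\frac{\big|\pi'[\pmb q]w\big|^2}{\big|\pi'[\pmb p]w\big|^2}}=\sqrt{\frac{\ta(\pmb q,\pmb w)\,\big|\langle w,w\rangle\big|}{\ta(\pmb p,\pmb w)\,\big|\langle w,w\rangle\big|}}=\sqrt{\frac{\ta(\pmb q,\pmb w)}{\ta(\pmb p,\pmb w)}},
\end{equation*}
which is the claimed formula. Since the $\langle w,w\rangle$ factors cancel before any square root of $\langle w,w\rangle$ is taken, the expression makes sense even when $w$ is lightlike (that is, $\langle w,w\rangle=0$), provided the individual times are nonzero; this justifies the parenthetical remark in the statement. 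Finally, specializing $\pmb w=\pmb p$ gives $\ta(\pmb p,\pmb p)=1$, hence $t_{\pmb q}/t_{\pmb p}=\sqrt{\ta(\pmb q,\pmb p)}=\gamma_{\pmb q,\pmb p}$, recovering the time dilation formula.

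I expect the only real subtlety to be bookkeeping the signs and absolute values: because $\langle r,r\rangle<0$ in $\mathcal K$, one must be careful that $\big|\pi'[\pmb r]w\big|^2$ is written with $\big|\langle r,r\rangle\big|=-\langle r,r\rangle$ so that it is genuinely nonnegative, and one must confirm that the sign ambiguity $\pm$ in the definition of coordinate time does not affect the ratio, which it does not since we take absolute values (the proposition asserts a ratio of times, and any residual sign is a matter of the chosen time orientation of the frames). The algebraic identification of $\big|\pi'[\pmb r]w\big|^2$ with $\ta(\pmb r,\pmb w)\,\big|\langle w,w\rangle\big|$ is the heart of the argument and is essentially the same computation already carried out in the displayed equation preceding Remark \ref{rmk:velocity}.
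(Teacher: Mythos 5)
Your proof is correct and takes essentially the same route as the paper's: compute $t_{\pmb r}^2$ from the projector $\pi'[\pmb r]w$, recognize the tance, and let the $\langle w,w\rangle$ factors cancel in the ratio. (One small sign quibble: since $\langle r,r\rangle<0$, one actually gets $\big|\pi'[\pmb r]w\big|^2=-\ta(\pmb r,\pmb w)\,\langle w,w\rangle$, which equals $\ta(\pmb r,\pmb w)\,\big|\langle w,w\rangle\big|$ only for timelike $w$; for spacelike $w$ the sign flips, but it flips in both numerator and denominator, so the claimed ratio is unaffected.)
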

\begin{proof}
Follows directly from $t_{\pmb p}^2=\big\langle\pi'[p]w,\pi'[p]w\big\rangle=\frac{\langle p,w\rangle\langle w,p\rangle}{\langle p,p\rangle}$ and $t_{\pmb q}^2=\big\langle\pi'[q]w,\pi'[q]w\big\rangle=\frac{\langle q,w\rangle\langle w,q\rangle}{\langle q,q\rangle}$.
\end{proof}

Taking $\frac{t_{\pmb p}}{t_{\pmb q}},\frac{t_{\pmb q}}{t_{\pmb p}}$ as projective coordinates, one can think of time dilation as a function $\mathbb M^{n+1}\setminus\{0\}\to\mathbb P_\mathbb R^1$; this allows to accommodate the cases when the event $w$ happens at time $t=0$ for (exactly) one of the inertial reference frames.

\begin{prop}[ (length contraction)]\label{prop:spacecontraction}
Let\/ $\pmb p,\pmb q\in\mathcal K$ be inertial observers and assume that\/ $\pmb p$ observes a rigid rod at rest as having length\/ $\ell_{\pmb p}$. We represent the rod by a spacelike vector\/ $w\in p^\perp\setminus\{0\}$. Then,
$$\frac{\ell_{\pmb q}}{\ell_{\pmb p}}=\sqrt{1+\frac{\ta(\pmb q,\pmb w)}{\ta(\pmb p,\pmb q)}},$$
where\/ $\ell_{\pmb q}$ stands for the length of the rod as measured by\/ $\pmb q$. In particular, if\/ $p,q,w$ are coplanar\/ {\rm(}that is, the rod is in the direction of the relative velocity between\/ $\pmb p$ and\/ $\pmb q${\rm)}, then $\ell_{\pmb p}=\gamma_{\pmb p,\pmb q}\ell_{\pmb q}$.
\end{prop}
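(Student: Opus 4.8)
The plan is to reduce the statement to a short algebraic identity about tances by first encoding, geometrically, the way each observer measures the rod. Since the rod is at rest for $\pmb p$ and is represented by the spacelike vector $w\in p^\perp$, its proper length is simply $\ell_{\pmb p}=\sqrt{\langle w,w\rangle}$, the form being positive-definite on $p^\perp$. I would model the history of the rod as the $2$-plane $\Sigma=\mathrm{span}(p,w)\subset\mathbb M^{n+1}$: the endpoint at the spatial origin of $\pmb p$ sweeps the worldline $\mathbb Rp$, while the other endpoint sweeps the parallel worldline $\{\lambda p+w:\lambda\in\mathbb R\}$, both contained in $\Sigma$.

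The key step is to express what $\pmb q$ measures. Following the relativistic prescription, $\pmb q$ records the two endpoints \emph{simultaneously in its own frame}; its instantaneous rest space is $q^\perp$. Intersecting the two endpoint worldlines with $q^\perp$ yields the origin on the first worldline and, solving $\langle\lambda p+w,q\rangle=0$, the event $E=w-\frac{\langle w,q\rangle}{\langle p,q\rangle}p$ on the second. Since $E\in q^\perp$ and the form is Euclidean there, the measured length is $\ell_{\pmb q}=\sqrt{\langle E,E\rangle}$. Expanding $\langle E,E\rangle$ and using the orthogonality $\langle w,p\rangle=0$ to kill the cross term, I obtain $\ell_{\pmb q}^2=\langle w,w\rangle+\frac{\langle w,q\rangle^2\langle p,p\rangle}{\langle p,q\rangle^2}$.

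It then remains to recognize the second summand as a ratio of tances. Dividing by $\ell_{\pmb p}^2=\langle w,w\rangle$ and comparing with $\ta(\pmb q,\pmb w)=\frac{\langle q,w\rangle^2}{\langle q,q\rangle\langle w,w\rangle}$ and $\ta(\pmb p,\pmb q)=\frac{\langle p,q\rangle^2}{\langle p,p\rangle\langle q,q\rangle}$, one sees at once that $\frac{\ell_{\pmb q}^2}{\ell_{\pmb p}^2}=1+\frac{\ta(\pmb q,\pmb w)}{\ta(\pmb p,\pmb q)}$, which is the claimed formula (because $\langle p,p\rangle<0$ this correction is negative, so the rod appears contracted, as it should). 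For the coplanar case I would pick the representative $w=\pi[\pmb p]q$ of $\pmb w$; then $\langle q,w\rangle=\langle w,w\rangle$, whence $\ta(\pmb q,\pmb w)=\langle w,w\rangle/\langle q,q\rangle=1-\ta(\pmb p,\pmb q)$, i.e. $\ta(\pmb p,\pmb q)+\ta(\pmb q,\pmb w)=1$. Substituting this into the general formula collapses it to $\ell_{\pmb q}^2/\ell_{\pmb p}^2=1/\ta(\pmb p,\pmb q)=\gamma_{\pmb p,\pmb q}^{-2}$, that is, $\ell_{\pmb p}=\gamma_{\pmb p,\pmb q}\ell_{\pmb q}$.

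The genuinely delicate point is the second step --- translating ``length measured by $\pmb q$'' into the single vector $E\in\Sigma\cap q^\perp$. Everything physical is concentrated in the choice of $\pmb q$'s simultaneity slice $q^\perp$ and in identifying which segment of the worldsheet $\Sigma$ it cuts out; once that is settled, the remaining manipulations are routine and the orthogonality $w\in p^\perp$ does all the algebraic work.
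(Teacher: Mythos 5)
Your proof is correct and follows essentially the same route as the paper: both compute $\ell_{\pmb q}$ as the norm of the same vector $w-\frac{\langle w,q\rangle}{\langle p,q\rangle}p\in q^\perp$ (you additionally spell out the simultaneity-slice justification that the paper leaves as a parenthetical) and expand using $w\in p^\perp$ to recognize the tance ratio. The only divergence is in the coplanar case, where you establish $\ta(\pmb p,\pmb q)+\ta(\pmb q,\pmb w)=1$ by a direct computation with the representative $w=\pi[\pmb p]q$, whereas the paper derives the same identity from the vanishing of the Gram determinant of $p,q,w$; both arguments are valid.
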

\begin{proof}
We have $\ell_{\pmb p}=|w|$ and $\ell_{\pmb q}=|w'|$, where $w':=w-\frac{\langle w,q\rangle}{\langle p,q\rangle}p$ (note that $w'\in q^\perp$ and that $w'$ belongs to the straight line through $w$ parallel to $\mathbb Rp$). Therefore,
$$\ell_{\pmb q}^2=\bigg\langle w-\frac{\langle w,q\rangle}{\langle p,q\rangle}p,w-\frac{\langle w,q\rangle}{\langle p,q\rangle}p\bigg\rangle=\langle w,w\rangle\bigg(1+\frac{\langle w,q\rangle\langle q,w\rangle\langle p,p\rangle}{\langle p,q\rangle\langle q,p\rangle\langle w,w\rangle}\cdot\frac{\langle q,q\rangle}{\langle q,q\rangle}\bigg)=\ell_{\pmb p}^2\bigg(1+\frac{\ta(\pmb q,\pmb w)}{\ta(\pmb p,\pmb q)}\bigg).$$
When $p,q,w$ are coplanar, the determinant
$\det\left[\begin{smallmatrix}
\langle p,p\rangle&\langle p,q\rangle&0\\
\langle q,p\rangle&\langle q,q\rangle&\langle q,w\rangle\\
0&\langle w,q\rangle&\langle w,w\rangle
\end{smallmatrix}\right]$
vanishes, that is, $\ta(w,q)+\ta(p,q)=1$ which implies the result.
\end{proof}

Given $\pmb p,\pmb q\in\overline{\mathcal K}$, the geometric configuration corresponding to the coplanar case in the above proposition is unique. Indeed, $\pmb w$ must be the point orthogonal to $\pmb p$ in the extended geodesic $\G\wr{\pmb p},{\pmb q}\wr$ (because $w\in p^\perp$ and the coplanarity of $p,q,w$ means that $\pmb w$ belongs to $\G{\wr}\pmb p,\pmb q{\wr}$). Similarly, $\pmb w'$ must be the point in $\G\wr{\pmb p},{\pmb q}\wr$ orthogonal to $\pmb q$. Moreover, it is curious to note that the formula $\ell_{\pmb p}=\gamma_{\pmb p,\pmb q}\ell_{\pmb q}$ is actually a direct consequence of the geometric identity $\ta(\pmb p,\pmb q)=\ta(\pmb w,{\pmb w}')$ (whose proof is a straightforward calculation). Indeed, we have
$$\gamma_{\pmb p,\pmb q}^2=\ta(\pmb p,\pmb q)=\ta(\pmb w,\pmb w')=\frac{\langle w,w'\rangle\langle w',w\rangle}{\langle w,w\rangle\langle w',w'\rangle}=\frac{\langle w,w\rangle\langle w,w\rangle}{\langle w,w\rangle\langle w',w'\rangle}=\frac{|w|^2}{|w'|^2}=\frac{\ell_{\pmb p}^2}{\ell_{\pmb q}^2}$$
since $w'=w-\frac{\langle w,q\rangle}{\langle p,q\rangle}p$ and $w\in p^\perp$.

\subsection{Rapidity, velocity, and parallel transport}\label{sub:rapidityandvelocity}

{\bf Rapidity and rapidity addition.} Given an inertial observer $\pmb p\in\mathcal K$, we call the tangent space $\T_{\pmb p}\mathcal K$ the \textit{space of rapidities\/} at $\pmb p$. A tangent vector $w\in\T_{\pmb p}\mathcal K$ is the \textit{relative rapidity,} as measured by $\pmb p$ (or, simply, at $\pmb p$), between $\pmb p$ and the inertial observer $\pmb q:=\exp_{\pmb p}w$, where $\exp$ stands for the Riemannian exponential map. Hence, the hyperbolic distance between $\pmb p,\pmb q$ is $d(\pmb p,\pmb q)=|w|$.

There is a natural way to sum rapidities at $\pmb p\in\mathcal K$ that takes into account the geometry of the kinematic space. After introducing it, we will relate rapidity and velocity in order to show that the geometric sum of rapidities leads to the relativistic velocities addition.

\begin{defi}\label{defi:sum_of_rapidities}
Let $\pmb p\in\mathcal K$ be an inertial observer and let $w_1,w_2\in\T_{\pmb p}\mathcal K$ be rapidities. Take $\pmb q:=\exp_{\pmb p}w_1$, let $w_2'\in\T_{\pmb q}\mathcal K$ be the parallel transport of $w_2$ along the geodesic segment joining $\pmb p$ and $\pmb q$, and let $\pmb r:=\exp_{\pmb q}w_2'$. We define the \textit{sum of rapidities\/} $w_1\oplus w_2\in\T_{\pmb p}\mathcal K$ as the unique rapidity $w\in\T_{\pmb p}\mathcal K$ such that $\exp_{\pmb p}w=\pmb r$. Equivalently, $w_1\oplus w_2:=\exp_{\pmb p}^{-1}\pmb r$ (see Figure 2).
\end{defi}

Clearly, the above definition works in any Riemannian manifold with infinite injectivity radius and, in the particular case of an Euclidean vector space, it coincides with the vector space sum. (In fact, Definition \ref{defi:sum_of_rapidities} can be seen as a straightforward generalization of the vector sum in an Euclidean vector space.)

\smallskip

{\bf Scaled rapidity.} While rapidities live in the tangent spaces to points in the kinematic space, {\it scaled rapidities\/} (a.k.a {\it hyperbolic velocities\/}) appear naturally as tangent vectors to points in the  \textit{scaled kinematic space\/} $\mathcal K^c$. In order to introduce the scaled kinematic space we will use the following remark.

\begin{wrapfigure}{R}{0pt}\label{fig:rapidity_addition}
\includegraphics[width=0.25\linewidth]{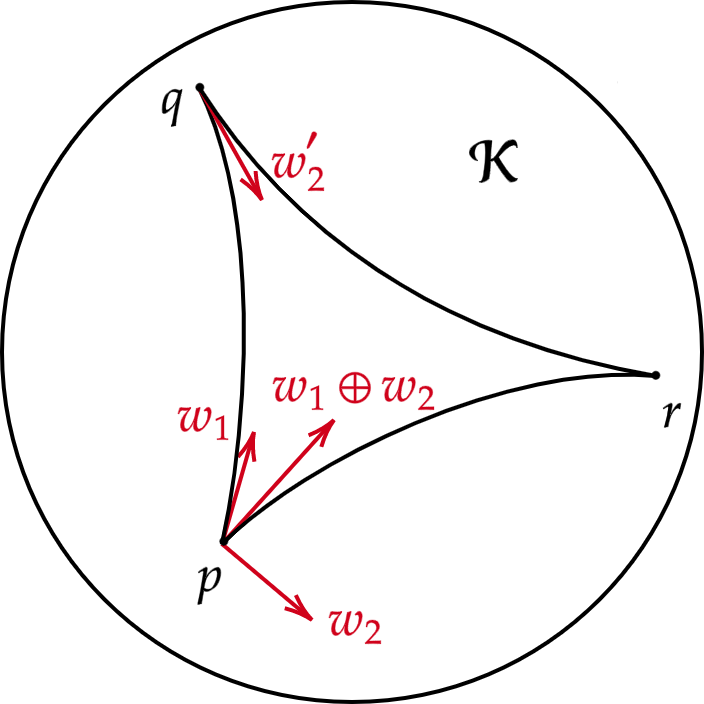}
\caption{Rapidity addition}
\end{wrapfigure}

\begin{rmk}\label{rmk:representative}
Once a representative $p\in\mathbb M^{n+1}$ of $\pmb p\in\mathcal K$ is chosen, we identify $\T_{\pmb p}\mathcal K$ with $p^\perp$ via \eqref{eq:tangentspace}, that is, via the map $\varphi\mapsto\varphi(p)\in p^\perp$, $\varphi\in\T_{\pmb p}\mathcal K\simeq\mathrm{Lin}(\mathbb Rp,p^\perp)$. (There is, however, a \emph{natural\/} identification $\T_{\pmb p}\mathcal K\simeq p^\perp$, see Remark~\ref{rmk:natural_identification}.)
\end{rmk}

The (open) scaled kinematic space is the manifold $\mathcal K$ endowed with a different Riemannian metric as follows. Given $\pmb p\in\mathcal K$, we take the future-directed representative $p\in\mathbb M^{n+1}$ such that $\langle p,p\rangle=-c^2$ and identify $\T_{\pmb p}\mathcal K\simeq p^\perp$ as in Remark~\ref{rmk:representative}. Now, we equip $\T_{\pmb p}\mathcal K$ with the inner product in $p^\perp$ (that is, the restriction of the symmetric bilinear form in $\mathbb M^{n+1}$ to $p^\perp$). Provided with such Riemannian metric, the manifold $\mathcal K$ is called the \textit{scaled kinematic space\/} $\mathcal K^c$. The scaled kinematic space $\mathcal K^c$ is a hyperbolic space of constant curvature $-1/c^2$ because it is isometric to the future sheet of the hyperboloid $\langle x,x\rangle=-c^2$ with the induced metric from Minkowski space. The concepts of {\it space of scaled rapidities,} of {\it relative scaled rapidity,} and of {\it sum of scaled rapidities\/} are analogous to their rapidity counterparts.

Let $w\in\T_{\pmb p}\mathcal K$ be a relative rapidity at $\pmb p$ which correponds to the relative scaled rapidity $w_c\in\T_{\pmb p}\mathcal K^c$. It follows from~\eqref{eq:metric} that $|w_c|=c|w|$, where the left-hand side (respectively, the right-hand side) norm is the one in $\T_{\pmb p}\mathcal K^c$ (respectively, in $\T_{\pmb p}\mathcal K$).

\begin{rmk}\label{rmk:natural_identification}
Let $\pmb p\in\mathcal K$. There is a natural identification $\T_{\pmb p}\mathcal K\simeq p^\perp\subset\mathbb M^{n+1}$ because, given $\varphi\in\T_{\pmb p}\mathcal K=\mathrm{Lin}(\mathbb Rp,p^\perp)$, there exists a unique future-oriented representative $p\in\mathbb M^{n+1}$ such that $u:=\varphi(p)\in p^\perp$ satisfies $\langle u,u\rangle=\langle\varphi,\varphi\rangle_{\pmb p}$. Clearly, $\langle p,p\rangle=-1$. Analogously, there is a natural identification $\T_{\pmb p}\mathcal K^c\simeq p^\perp\subset\mathbb M^{n+1}$ and the corresponding representative of $p$ in this case satisfies $\langle p,p\rangle=-c^2$.
\end{rmk}

{\bf Velocity.} Velocity and (relative) rapidity are concepts of different natures because velocity is algebraic. Let us introduce the space of velocities at a point $\pmb p\in\mathcal K^c$ and endow it with its natural geometric structure.

\smallskip

Given $\pmb p,\pmb q\in\mathcal K^c$, we define the {\it relative velocity\/} between $\pmb p,\pmb q$ at $\pmb p$ as the simplest algebraic expression (in the sense that it does not depend on the choice of representatives) for a tangent vector $v\in\T_{\pmb p}\mathcal K^c$ that is tangent to the geodesic $\G{\wr}\pmb p,\pmb q{\wr}$ at $\pmb p$:

\begin{defi}\label{defi:relative_velocity}
Given $\pmb p\in\mathcal K^c$, the relative velocity $v\in\T_{\pmb p}\mathcal K^c\simeq\mathrm{Lin}(\mathbb Rp,p^\perp)$ between $\pmb p$ and $\pmb q\in\overline{\mathcal K}^c$ at $\pmb p$ is defined as the linear map $v=\langle-,p\rangle\displaystyle\frac{\pi[\pmb p]q}{\langle q,p\rangle}$, where $\langle-,p\rangle$ stands for the linear functional $x\mapsto\langle x,p\rangle$, $x\in\mathbb M^{n+1}$.
\end{defi}

By \cite[Lemma 5.2]{AGr2011}, the relative velocity between $\pmb p$ and $\pmb q$ at $\pmb p$ is tangent to the geodesic $G{\wr}\pmb p,\pmb q{\wr}$. So, the relative (scaled) rapidity and the corresponding relative velocity between inertial observers $\pmb p,\pmb q$ at $\pmb p$ have the same direction.

\begin{prop}\label{prop:velocity_is_velocity} Under the identification\/ $\T_{\pmb p}\mathcal K^c\simeq p^\perp$ in Remark\/ \ref{rmk:natural_identification}, the above definition of relative velocity coincides with the usual one.
\end{prop}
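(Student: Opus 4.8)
The plan is to reduce the proposition to a direct coordinate computation in an adapted inertial frame, after unwinding the natural identification of Remark \ref{rmk:natural_identification}. First I would recall that, for the scaled kinematic space $\mathcal K^c$, the identification $\T_{\pmb p}\mathcal K^c\simeq p^\perp$ sends a linear map $\varphi\in\mathrm{Lin}(\mathbb Rp,p^\perp)$ to $\varphi(p)\in p^\perp$, where $p$ is the future-directed representative normalized by $\langle p,p\rangle=-c^2$. Applying this to the relative velocity of Definition \ref{defi:relative_velocity} gives
$$v(p)=\langle p,p\rangle\,\frac{\pi[\pmb p]q}{\langle q,p\rangle}=-c^2\,\frac{\pi[\pmb p]q}{\langle q,p\rangle},$$
so everything reduces to identifying this vector of $p^\perp$ with the usual spatial velocity. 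I would also note first that $v$ is genuinely representative-independent (a quick check that rescaling $p$ and $q$ leaves the linear map unchanged), so that evaluating at the normalized $p$ is legitimate.

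Next I would pick coordinates adapted to $\pmb p$: write $p=(c,0,\dots,0)$, so that $\langle p,p\rangle=-c^2$ and $x^0=ct$ is $\pmb p$'s coordinate time, and write a future-timelike representative $q=(q_0,q_1,\dots,q_n)$ of $\pmb q$. A short computation gives $\langle q,p\rangle=-cq_0$ and $\pi[\pmb p]q=(0,q_1,\dots,q_n)$ (the orthogonal projection onto the simultaneity hyperplane $p^\perp$ of $\pmb p$). Substituting, I obtain
$$v(p)=-c^2\,\frac{(0,q_1,\dots,q_n)}{-cq_0}=\Big(0,\tfrac{c}{q_0}q_1,\dots,\tfrac{c}{q_0}q_n\Big).$$

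To finish I would compare this with the usual relative velocity. Parametrizing the worldline $\mathbb Rq$ of $\pmb q$ by $\pmb p$'s coordinate time, the event $sq$ occurs at time $t=sq_0/c$ and spatial position $\vec x=s(q_1,\dots,q_n)$; eliminating $s$ yields $\vec x(t)=\frac{ct}{q_0}(q_1,\dots,q_n)$, whence the ordinary three-velocity measured by $\pmb p$ is $\vec v=\mathrm d\vec x/\mathrm dt=\frac{c}{q_0}(q_1,\dots,q_n)$, with $|\vec v|<c$ exactly because $q$ is timelike. This is precisely the spatial part of $v(p)$, which lies in $p^\perp$ and hence is naturally read as a vector in $\pmb p$'s rest space. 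Thus $v(p)=(0,\vec v)$ coincides with the usual relative velocity, as claimed.

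The computation itself is routine; the only genuine point to get right is the normalization. The identification must use the $\mathcal K^c$-representative $\langle p,p\rangle=-c^2$ rather than the $\mathcal K$-representative $\langle p,p\rangle=-1$, since it is exactly the scalar $\langle p,p\rangle=-c^2$ that makes $v(p)$ carry the correct physical dimensions and a magnitude bounded by $c$; with the unscaled normalization one would instead land on $\frac1{q_0}(0,q_1,\dots,q_n)$, which is $\vec v/c$. So the main thing to be careful about is tracking this factor through the identification and confirming that the resulting vector is dimensionally and numerically the special-relativistic three-velocity.
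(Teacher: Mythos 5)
Your proof is correct, and its pivot point is the same as the paper's: unwinding the identification of Remark \ref{rmk:natural_identification} to see that the tangent vector of Definition \ref{defi:relative_velocity} becomes $-c^2\,\pi[\pmb p]q/\langle q,p\rangle\in p^\perp$. Where you diverge is in how you then match this with the physical velocity. The paper stays coordinate-free: it checks the direction (the usual velocity points along $\pi[\pmb p]q$, and $\langle p,q\rangle<0$ fixes the sign) and the magnitude separately, computing $\bigl\langle -c^2\tfrac{\pi[\pmb p]q}{\langle q,p\rangle},-c^2\tfrac{\pi[\pmb p]q}{\langle q,p\rangle}\bigr\rangle=c^2\bigl(1-\tfrac1{\ta(\pmb p,\pmb q)}\bigr)$ and invoking formula \eqref{eq:velocity_tance} from Remark \ref{rmk:velocity} for the norm of the usual relative velocity. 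You instead work in a frame adapted to $\pmb p$ and identify the entire vector with $\mathrm d\vec x/\mathrm dt$ at once. Your route is more elementary and more self-contained: it does not lean on Remark \ref{rmk:velocity}, and it actually \emph{derives} (rather than asserts) that the usual velocity has the direction of $\pi[\pmb p]q$, which the paper takes as known. The cost is the choice of coordinates, though that is harmless here since the claim is an equality of vectors. Your closing remark about the normalization $\langle p,p\rangle=-c^2$ versus $\langle p,p\rangle=-1$ is exactly the right thing to worry about and is handled correctly. One small omission: the proposition is stated for $\pmb q\in\overline{\mathcal K}^c$, while your argument assumes $q$ future-timelike; the lightlike case goes through verbatim, with $\sum q_i^2=q_0^2$ giving $|\vec v|=c$, matching the fact that $\ta(\pmb p,\pmb q)$ degenerates so that the norm formula yields $c$.
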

\begin{proof}\renewcommand{\qedsymbol}{}
Let $\pmb p\in\mathcal K^c$ and let $\pmb q\in\overline{\mathcal K}^c$. At the level of Minkowski space, the usual relative velocity between $\mathbb Rp,\mathbb Rq$ as measured by $\mathbb Rp$ has the norm given in equation~\eqref{eq:velocity_tance} and the direction of the projection $\pi[\pmb p]q\in p^\perp$ for a future-oriented $q$. On the other hand, the tangent vector $\langle-,p\rangle\frac{\pi[\pmb p]q}{\langle q,p\rangle}$ corresponds, via the identication $\T_{\pmb p}\mathcal K^c\simeq p^\perp$, to $-c^2\frac{\pi[\pmb p]q}{\langle q,p\rangle}$. It remains to observe that $\langle p,q\rangle<0$ (since both are future-oriented) and that
\begin{equation*}
\bigg\langle-c^2\frac{\pi[\pmb p]q}{\langle q,p\rangle},-c^2\frac{\pi[\pmb p]q}{\langle q,p\rangle}\bigg\rangle=c^2\bigg(1-\frac1{\ta(\pmb p,\pmb q)}\bigg).\tag*{$\square$}
\end{equation*}
\end{proof}

The symmetric bilinear form restricted to $\mathbb Rp+\mathbb Rq$, where $\pmb p,\pmb q$ are as in the proof above, has signature $-+$. Hence, the determinant of the Gram matrix
$\left[\begin{smallmatrix}\langle p,p\rangle&\langle p,q\rangle\\
\langle q,p\rangle&\langle q,q\rangle\end{smallmatrix}\right]$ is negative which implies that $\ta(\pmb p,\pmb q)\geqslant1$. The norm of the velocity in Definition \ref{defi:relative_velocity} is therefore always less or equal than $c$. So, the relative velocities at $\pmb p$ constitute the closed $n$-ball $\mathcal V_{\pmb p}\subset\T_{\pmb p}\mathcal K^c$ of radius $c$ centered at $0\in\T_{\pmb p}\mathcal K^c$. Such closed ball is called the \textit{space of velocities\/} $\mathcal V_p$ at $\pmb p$. (This definition can be seen as a coordinate-free form of Remark \ref{rmk:velocity}.)

\medskip

{\bf Hyperbolic structure on $\mathcal V_{\pmb p}$}.
Besides the inner product inherited from $p^\perp$, the space of velocities $\mathcal V_{\pmb p}$ has a natural hyperbolic structure induced from $\mathcal K^c$: we simply send a velocity $v\in\mathcal V_{\pmb p}$ to the inertial observer $\pmb q\in\mathcal K^c$ such that the relative velocity between $\pmb p,\pmb q$ at $\pmb p$ equals $v$ and equip $\mathcal V_{\pmb p}$ with the pullback metric. From the perspective of Minkowski space (see Figure 3), this is nothing but (1) associating a vector $v\in p^\perp$ satisfying $\langle v,v\rangle<c^2$ to the inertial observer $\mathbb R(p+v)$, where $p$ is the future-oriented representative of $\pmb p$ with $\langle p,p\rangle=-c^2$, and (2) equipping $p+\mathbb B^n\simeq\mathcal V_{\pmb p}$ with the hyperbolic metric that comes from the stereographic projection onto the hyperboloid $\langle x,x\rangle=-c^2$, where $\mathbb B^n\subset p^\perp$ stands for the open ball of radius $c$ centred at the origin. Note that, while rapidity is intended to measure the distance between inertial reference frames, the role of scaled rapidity is to measure the ``distance between velocities'' in a velocity space $\mathcal V_{\pmb p}$.

\begin{figure}[h!]
\centering
\includegraphics[width=0.25\textwidth]{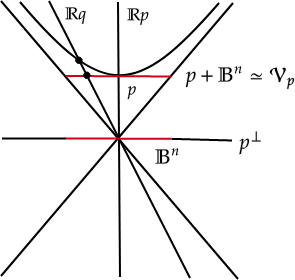}
\caption{Hyperbolic structure on $\mathcal V_{\pmb p}$ (at the level of Minkowski space)}
\label{fig:hyperboloid}
\end{figure}

{\bf Relativistic velocity addition.}
The relative velocity $v\in\mathcal V_{\pmb p}$ between $\pmb p,\pmb q\in\mathcal K^c$ at $\pmb p$ and the corresponding relative scaled rapidity $w_c\in\T_{\pmb p}\mathcal K^c$ are related by
\begin{equation}\label{eq:velocity_scaled_rapidity}
v=v(w_c)=c\Big(\tanh\big(|w_c|/c\big)\Big)\frac{w_c}{|w_c|}
\end{equation}
because those tangent vectors have the same direction and
$$|v|^2=c^2\bigg(1-\frac{1}{\ta(\pmb p,\pmb q)}\bigg)=c^2\bigg(1-\frac{1}{\cosh^2\big(d^c(\pmb p,\pmb q)/c\big)}\bigg)=c^2\tanh^2{\frac{d^c(\pmb p,\pmb q)}c}=c^2\tanh^2\frac{|w_c|}c$$
by Remark \ref{rmk:velocity}, where $d^c(\pmb p,\pmb q)$ stands for the distance function in $\mathcal K^c$. In particular, $v=v(w)=c\big(\tanh|w|\big)\frac w{|w|}$, where $w\in\T_{\pmb p}\mathcal K$ stands for the rapidity between $\pmb p,\pmb q$ at $\pmb p$.

\begin{defi}\label{defi:sum_of_velocities}
Let $v_1,v_2\in\mathcal V_{\pmb p}$ be velocities and let $w_1,w_2\in\T_{\pmb p}\mathcal K^c$ be the corresponding scaled rapidities. We define $v_1\oplus v_2$ simply as the velocity that corresponds to $w_1\oplus w_2$, that is, $v_1\oplus v_2:=v(w_1\oplus w_2)$. (One can also take rapidities instead of scaled rapidities here.)
\end{defi}

\begin{prop} The above definition of velocity addition coincides with the usual relativistic velocity addition.
\end{prop}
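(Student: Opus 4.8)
The plan is to carry the whole construction over to the hyperboloid model and to recognize the rapidity–addition procedure of Definition \ref{defi:sum_of_rapidities} as a composition of two Lorentz boosts. Fix $\pmb p\in\mathcal K^c$ and its future-directed representative $p$ with $\langle p,p\rangle=-c^2$, so that $\T_{\pmb p}\mathcal K^c\simeq p^\perp$ (Remark \ref{rmk:natural_identification}) and $\mathcal K^c$ is isometric to the future sheet of $\langle x,x\rangle=-c^2$. Writing $e:=v_1/|v_1|\in p^\perp$ and $\theta_1:=|w_1|/c$, equation \eqref{eq:velocity_scaled_rapidity} and Remark \ref{rmk:velocity} give $\tanh\theta_1=|v_1|/c$ and $\cosh\theta_1=\gamma_1:=(1-|v_1|^2/c^2)^{-1/2}$. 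The boost $B_1$ acting in the $p$–$e$ plane by $B_1p=\cosh\theta_1\,p+c\sinh\theta_1\,e$, $B_1e=\tfrac1c\sinh\theta_1\,p+\cosh\theta_1\,e$, and fixing $e^\perp\cap p^\perp$, preserves the form and sends $\pmb p$ to $\pmb q:=\exp_{\pmb p}w_1$; this one-parameter group of boosts is precisely the group of transvections along the geodesic $\G{\wr}\pmb p,\pmb q{\wr}$.

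The decisive step, which I expect to be the main obstacle, is to identify the parallel transport in Definition \ref{defi:sum_of_rapidities} with the differential of $B_1$. I would verify this directly: the unit-speed geodesic is $\gamma(s)=\cosh(s/c)\,p+c\sinh(s/c)\,e$, its velocity field is parallel and satisfies $\gamma'(\theta_1 c)=B_1e$, while every vector of $p^\perp$ orthogonal to $e$ is parallel along $\gamma$ and is fixed by $B_1$; hence $w_2'=\mathrm dB_1(w_2)$. Since $B_1$ is an isometry of $\mathcal K^c$ with $\pmb q=B_1\pmb p$, naturality of the exponential map yields $\pmb r=\exp_{\pmb q}w_2'=\exp_{B_1\pmb p}(\mathrm dB_1 w_2)=B_1\exp_{\pmb p}w_2=B_1\pmb q_2$, where $\pmb q_2:=\exp_{\pmb p}w_2$ is the observer whose relative velocity at $\pmb p$ is $v_2$. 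Thus the geometric sum is realised by $\pmb r=B_1\pmb q_2$, the composition of the two boosts; this is the geometric substance of relativistic velocity addition (and already encodes, implicitly, the Wigner rotation carried by $B_1B_2$).

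It then remains to compute and compare. Representing $\pmb q_2$ by $q_2=\gamma_2(p+v_2)$ as in the hyperbolic structure on $\mathcal V_{\pmb p}$, I decompose $v_2=v_2^\parallel+v_2^\perp$ along and orthogonal to $e$ and apply $B_1$ term by term, obtaining $r=B_1q_2=A\,p+B\,e+\gamma_2 v_2^\perp$ with $A=\gamma_2\big(\cosh\theta_1+\tfrac{\langle v_2,e\rangle}{c}\sinh\theta_1\big)$ and $B=\gamma_2\big(c\sinh\theta_1+\langle v_2,e\rangle\cosh\theta_1\big)$. By Definition \ref{defi:sum_of_velocities} the sum $v_1\oplus v_2$ is the relative velocity of $\pmb r$ at $\pmb p$, which by Definition \ref{defi:relative_velocity} and the computation in the proof of Proposition \ref{prop:velocity_is_velocity} equals $-c^2\,\pi[\pmb p]r/\langle r,p\rangle$. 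Since $\pi[\pmb p]r=B\,e+\gamma_2 v_2^\perp$ and $\langle r,p\rangle=-c^2A$, this reduces to $(B\,e+\gamma_2 v_2^\perp)/A$.

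Finally, the factor $\gamma_2$ cancels, and dividing numerator and denominator by $\cosh\theta_1=\gamma_1$ while using $c\tanh\theta_1=|v_1|$, $|v_1|e=v_1$, $\langle v_2,e\rangle e=v_2^\parallel$ and $\langle v_2,e\rangle|v_1|=\langle v_1,v_2\rangle$, the expression collapses to
\[
v_1\oplus v_2=\frac{v_1+v_2^\parallel+\gamma_1^{-1}v_2^\perp}{1+\langle v_1,v_2\rangle/c^2},
\]
which is exactly the Einstein velocity–addition law: the component parallel to $v_1$ adds according to the one-dimensional formula, while the perpendicular component is scaled by $\gamma_1^{-1}$. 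I would close by observing that in the collinear case $v_2^\perp=0$ and the magnitude reduces to $(|v_1|+|v_2|)/(1+|v_1|\,|v_2|/c^2)$, recovering the familiar scalar composition rule.
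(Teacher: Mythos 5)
Your proof is correct, and its first half --- identifying the parallel transport in Definition \ref{defi:sum_of_rapidities} with the differential of the boost $B_1$ that translates along $\G{\wr}\pmb p,\pmb q{\wr}$, and then invoking the naturality of the exponential map to obtain $\pmb r=B_1(\exp_{\pmb p}w_2)$ --- is exactly the paper's argument (the paper calls the isometry $I$ and transports from $\pmb q$ back to $\pmb p$ rather than from $\pmb p$ to $\pmb q$, but the content is the same key lemma). Where you diverge is in the endgame: the paper stops at the conceptual level, arguing that since the boost $\widetilde I$ carries the pair $(\mathbb Rq,\mathbb Rr)$ to $(\mathbb Rp,\mathbb R\widetilde I(r))$ isometrically, the parallel-transported vector is the physically correct ``velocity of $\pmb r$ relative to $\pmb q$ as measured by $\pmb p$'', so the construction reproduces the operational composition of velocities essentially by definition. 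You instead push the computation through in the hyperboloid model, representing $\pmb q_2$ by $\gamma_2(p+v_2)$, extracting $v_1\oplus v_2=-c^2\,\pi[\pmb p]r/\langle r,p\rangle$, and landing on the explicit Einstein formula with the $\gamma_1^{-1}$ factor on the perpendicular component (your algebra checks out). Your version is more self-contained and directly verifiable --- it does not require the reader to agree in advance on what ``the usual relativistic velocity addition'' means --- at the cost of a basis-dependent computation that the paper's coordinate-free style deliberately avoids; the explicit formula you derive would in fact be a useful complement to the paper's terser physical argument.
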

\begin{proof}
Let $\pmb p,\pmb q,\pmb r\in\mathcal K^c$ be inertial observers, let $v_1\in\T_{\pmb p}\mathcal K^c$ be the relative velocity between $\pmb p,\pmb q$ at $\pmb p$, and let $v_2'\in\T_{\pmb q}\mathcal K^c$ be the relative velocity between $\pmb q,\pmb r$ at $\pmb q$. The parallel transport $v_2\in\T_{\pmb p}\mathcal K^c$ of $v_2'$ along the geodesic segment joining $\pmb q$ and $\pmb p$ can be interpreted as the relative velocity between $\pmb q,\pmb r$ as measured by $\pmb p$. Indeed, let $I$ be the hyperbolic isometry that stabilizes $G{\wr}\pmb p,\pmb q{\wr}$ and satisfies $I(\pmb q)=\pmb p$. It is easy to see that $I_*(w_c')=w_c$, where $I_*$ stands for the differential of $I$ and $w_c$, $w_c'$ denote respectively the scaled rapidities corresponding to $v_2$, $v_2'$. By the naturality of the exponential map (see \cite[Proposition 5.20]{leemanifolds}, for instance), $\exp_{\pmb p}w_c=\exp_{\pmb p}\big(I_*(w_c')\big)=I(\exp_{\pmb q}(w_c'))=I(r)$. At the level of Minkowski space, the boost $\widetilde I$ corresponding to $I$ sends the pair of inertial observers $\mathbb Rq,\mathbb Rr$ to $\mathbb Rp,\mathbb R\widetilde I(r)$ and the relative velocity between the last two observers, as measured by $\mathbb Rp$, is therefore exactly the relative velocity between the first two ones as measured by $\mathbb Rp$.
\end{proof}

{\bf ``Parallelogram'' law.} Let us take a closer look at the geometry of the sum of velocities. Given velocities $v_1,v_2\in\mathcal V_{\pmb p}$, where $\pmb p\in\mathcal K^c$,
we can assume that $\mathcal V_{\pmb p}$ is an open disk in the two-dimensional subspace of $\T_{\pmb p}\mathcal K^c$ generated by $v_1,v_2$. Now, the sum $v_1\oplus v_2$ is obtained simply by applying to $v_2$ the hyperbolic isometry $I$ (in the sense of the hyperbolic structure of $\mathcal V_{\pmb p}$) that sends the null vector $0$ to $v_1$ and stabilizes the geodesic $G:=\G{\wr}0,v_1{\wr}$. Note that the sum of velocities is noncommutative because, if we apply to $v_1$ the hyperbolic isometry $I'$ that sends $0$ to $v_2$ and stabilizes the geodesic $\G{\wr}0,v_2{\wr}$ then, in general, $I(v_2)\ne I'(v_1)$. In other words, at a first glance, it seems that there is no ``parallelogram'' law for the relativistic addition of velocities. However, this is the case only if we require the parallelogram to be \textit{geodesic\/}; substituting one of the sides for a hypercycle, that is, for a curve that is equidistant from a geodesic, there is indeed a ``parallelogram law'' where the ``parallelogram'' has vertices $0,v_1,v_1\oplus v_2,v_2$ and the sides are the geodesic segment joining $0,v_1$, the geodesic segment joining $v_1,v_1\oplus v_2$, the segment of the hypercycle $H$ of $G$ joining $v_1\oplus v_2,v_2$, and the geodesic segment joining $v_2,0$. In other words, $v_1\oplus v_2$ is obtained by the geometric construction that follows. Draw: the geodesic $G$ joining $0,v_1$; the geodesic $G'$ joining $0,v_2$; the geodesic $G''$ through $v_1$ such that the oriented angle from $G$ to $G''$ at $v_1$ equals that from $G$ to $G'$ at $0$; the hypercycle $H$ of $G$ through $v_2$. Then, $v_1\oplus v_2$ is given by the intersection $H\cap G''$.

\begin{rmk}
This construction of the relativistic velocity addition can also be seen as a geometric realization of the M\"obius addition discussed by A.~Ungar; this follows from the above considerations and from the fact that Poincar\'e's hyperbolic disk $\mathbb H^1_\mathbb C$ (see Example \ref{example:classicgeometries}) is isometric to $\mathcal K$ when $\dim\mathcal K=2$. More precisely, given $\pmb o,\pmb p,\pmb q\in\mathbb H^1_\mathbb C$, we define $\pmb p\oplus_{\pmb o}\pmb q:=I(\pmb q)$, where $I$ stands for the hyperbolic isometry that stabilizes the geodesic $G{\wr}\pmb o,\pmb p{\wr}$ and satisfies $I(\pmb o)=\pmb p$. This is a coordinate-free geometric form of the M\"obius addition formula in \cite[Section 3.4]{ungar2001}: take the unitary disk $\mathbb D$ in $\mathbb C$ centered at the origin (which plays the role of $\pmb o$) and define $a\oplus_Mb:=(a+b)/(1+\overline{a}b)$ for all $a,b\in\mathbb D$.

Similarly, one can give a geometric description of the M\"obius subtraction $a\ominus_Mb:=a\oplus_M(-b)$ by defining $-\pmb q:=R(\pmb o)\pmb q$ and $\pmb p\ominus_{\pmb o}\pmb q:=R(\pmb m)\pmb q$, where $R(\pmb o)$ and $R(\pmb m)$ stand respectively for the reflection in $\pmb o$ and in the middle point $\pmb m$ of the geodesic segment joining $\pmb o$ and $\pmb p$. Indeed, the hyperbolic isometry $I$ that stabilizes the geodesic $G{\wr}\pmb o,\pmb p{\wr}$ and satisfies $I(\pmb o)=\pmb p$ can be written as $I=R(\pmb m)R(\pmb o)$. Now, $\pmb p\oplus_{\pmb o}(-\pmb q)=I(-\pmb q)=I\big(R(\pmb o)\pmb q\big)=R(\pmb m)R(\pmb o)R(\pmb o)\pmb q=R(\pmb m)\pmb q$.
\end{rmk}

Another geometric way to look at the relativistic velocities addition is the following. In order to obtain $v_1\oplus v_2$, we first project $v_2$ orthogonally (in the hyperbolic sense) over the direction of $v_1$ thus obtaining the \textit{horizontal component\/} $v$ of $v_2$. Now, if $v$ and $v_1$ have the same direction, we add $v_1$ and $v$ by simply taking the velocity $v_1\oplus v=v\oplus v_1\in\mathcal V_{\pmb p}$ that lies in the geodesic $G:=\mathrm G{\wr}0,v_1{\wr}$ and satisfies $d^c(0,v_1\oplus v)=d^c(0,v_1)+d^c(0,v)$, where $d^c$ stands for the hyperbolic distance in $\mathcal V_{\pmb p}$ (the case when $v$ and $v_1$ have opposite directions is handled similarly). Finally, it~remains to take the unique velocity $v_1\oplus v_2\in\mathcal V_{\pmb p}$ that is on the same side of $G$ as $v_2$, whose orthogonal projection onto $G$ is $v_1\oplus v$, and whose distance to $G$ equals that of $v_2$ (in other words, the \textit{vertical component\/} of $v_1\oplus v_2$ is the same as that of $v_2$).

\begin{figure}[htp]
\centering
\includegraphics[width=.7\textwidth]{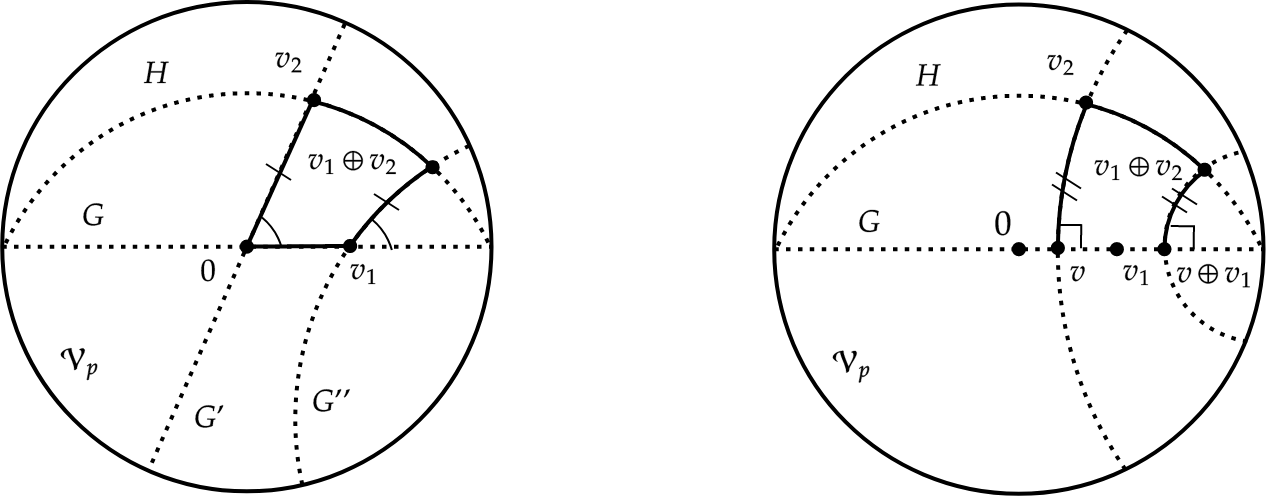}
\caption{``Parallelogram'' law and component sum}
\label{fig:hyperboloid}
\end{figure}

\subsection{Relativistic Doppler effect}\label{sub:doppler}

The relativistic Doppler effect can also be seen in a geometric way.\footnote{We thank J.~A.~Hoyos for suggesting that horocycles should be related to the relativistic Doppler effect.} In this section, we can assume (without loss of generality) that $\dim\mathcal K=2$.

A metric circle $C$ in $\mathcal K$ is the locus of inertial observers that see a given inertial observer $\pmb q\in\mathcal K$ (the center of the circle) with a same given energy. Indeed, $C=\{\pmb p\in\mathcal K\mid\ta(\pmb p,\pmb q)=r\}$, $r>0$, and the energy of $\pmb q$ as measured by $\pmb p$ is determined by $\gamma_{\pmb p,\pmb q}=\sqrt{\ta(\pmb p,\pmb q)}$. In the limit where $\pmb q$ goes to the absolute (and $r$ is fixed) this metric circle turns into a horocycle tangent to the absolute at a point $\pmb f\in\partial\mathcal K$ and the energy being measured by the inertial observers corresponding to points in this horocycle becomes that of the photon $\pmb f$. In other words, the function that assigns to each inertial observer in $\mathcal K$ the energy (or, equivalently, the frequency) that it measures for the photon $\pmb f$ is constant along horocycles (in fact, horocycles will be the level curves of this function, see Corollary \ref{cor:level_curves}). Let us formalize this argument.

\begin{lemma}\label{lemma:half_level_curves}
Let\/ $\pmb f\in\partial\mathcal K$ and let\/ $\pmb r,\pmb r'\in\mathcal K$ be inertial observers in a same horocycle containing $\pmb f$. Then, $\nu_{\pmb r}=\nu_{\pmb r'}$, where\/ $\nu_{\pmb r},\nu_{\pmb r'}$ stand for the frequencies of\/ $\pmb f$ as measured respectively by\/ $\pmb r,\pmb r'$.
\end{lemma}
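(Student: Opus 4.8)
The plan is to convert the equal-frequency assertion into the invariance of a single bilinear pairing under the parabolic flow that generates the horocycle. I would first fix a future-lightlike representative $f\in\mathbb M^{n+1}$ of the photon $\pmb f$ and, for every inertial observer $\pmb r\in\mathcal K$, use the future-directed representative $r$ normalized by $\langle r,r\rangle=-1$, as in Remark \ref{rmk:natural_identification}. With these choices the frequency that $\pmb r$ attributes to $\pmb f$ is $\nu_{\pmb r}=K\,(-\langle f,r\rangle)$, where $K$ depends only on the fixed representative $f$ (and on Planck's constant): this is the relativistic energy $-\langle f,r\rangle$ of a photon of $4$-momentum $f$ as seen by an observer of $4$-velocity $r$. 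Since $f$ is held fixed once and for all, the lemma is equivalent to the equality $\langle f,r\rangle=\langle f,r'\rangle$ between the normalized representatives of $\pmb r$ and $\pmb r'$.

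Next I would invoke the description of horocycles from the Isometries paragraph: a horocycle through $\pmb f$ is a single orbit of a one-parameter group $\{P_t\}\subset\mathrm{SO}^+(1,n)$ of parabolic isometries whose unique isotropic fixed point is $\pmb f$. As $\pmb r,\pmb r'$ lie on the same horocycle, there is $t_0$ with $\pmb r'=P_{t_0}(\pmb r)$, so it suffices to treat $\pmb r'=P(\pmb r)$ for the single parabolic isometry $P:=P_{t_0}$.

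The crux is that $P$ fixes the vector $f$ itself, not merely the line $\mathbb Rf$. Since $P$ preserves $\mathbb Rf$ and the future cone, $Pf=\lambda f$ for some $\lambda>0$; were $\lambda\ne1$, then $f$ would be an isotropic eigenvector of eigenvalue $\ne1$, forcing a second isotropic eigenvector (of eigenvalue $1/\lambda$) and hence a second fixed point of $P$ on $\partial\mathcal K$, so that $P$ would be a boost rather than a parabolic isometry --- a contradiction. Thus $\lambda=1$ and $Pf=f$. Because $P$ is a future-preserving isometry, $r':=P(r)$ is future-directed with $\langle r',r'\rangle=\langle r,r\rangle=-1$, so it is precisely the normalized representative of $\pmb r'=\mathbb RP(r)$. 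The computation then closes in one line,
$$\langle f,r'\rangle=\langle f,P(r)\rangle=\langle P^{-1}f,r\rangle=\langle f,r\rangle,$$
whence $\nu_{\pmb r}=\nu_{\pmb r'}$.

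I expect the only genuine friction to be expository rather than mathematical: committing to the single representative $f$ (the frequency of a lone photon is well defined only relative to such a choice, which is exactly why the meaningful statements are comparisons across observers) and recording the normalization under which $\nu_{\pmb r}=K\,(-\langle f,r\rangle)$. A conceptually cleaner alternative, which also anticipates Corollary \ref{cor:level_curves}, is to bypass the flow and verify directly that the horocycle coincides with the level set $\{\pmb r\in\mathcal K:\langle f,r\rangle=\mathrm{const},\ \langle r,r\rangle=-1\}$; this identifies $\pmb r\mapsto-\log(-\langle f,r\rangle)$ with the Busemann function at $\pmb f$ and makes the lemma immediate, at the cost of a short orthogonality computation against the geodesics that terminate at $\pmb f$.
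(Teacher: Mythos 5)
Your proof is correct and follows essentially the same route as the paper's: both realize $\pmb r'$ as the image of $\pmb r$ under a parabolic isometry fixing $\pmb f$, use that its lift to $\mathrm{SO}^+(1,n)$ fixes the lightlike vector $f$ itself (you prove this by the eigenvalue argument, the paper cites a reference), and conclude by invariance of the bilinear form. The only cosmetic difference is that you normalize representatives ($\langle r,r\rangle=-1$) where the paper works with the projector $\pi'[\pmb r]f$, which makes the ratio of energies manifestly representative-independent.
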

\begin{proof} Let $I: \overline{\mathcal{K}} \rightarrow \overline{\mathcal{K}}$ be the parabolic isometry that fixes $\pmb f$ and maps $\pmb r'$ to $\pmb r$. It is well-known that the energy $E_{\pmb r}$ of the photon $\pmb f$ as measured by $\pmb r$ is given by the magnitude of the projection of the $(n+1)$-momentum of the photon in the direction of $\mathbb Rr$ divided by $c$. Similarly, one can express the energy $E_{\pmb r'}$ of the photon $\pmb f$ as measured by $\pmb r'$, which leads to
$$\bigg(\frac{E_{\pmb r'}}{E_{\pmb r}} \bigg)^2=\frac{\big\langle \pi^{\prime}[\pmb r'] f,\pi^{\prime}[\pmb r']f\big\rangle}{\big\langle\pi^{\prime}[\pmb r] f, \pi^{\prime}[\pmb r] f \big \rangle}=\frac{\langle f,r'\rangle^2\langle r,r\rangle}{\langle f,r\rangle^2\langle r',r'\rangle}=\frac{\big\langle f,r'\big\rangle^2\big\langle\widetilde I(r'),\widetilde I(r')\big\rangle}{\big\langle f,\widetilde I(r')\big\rangle^2\langle r',r'\rangle}=\frac{\big\langle f,r'\big\rangle^2\big\langle\widetilde I(r'),\widetilde I(r')\big\rangle}{\big\langle\widetilde I(f),\widetilde I(r')\big\rangle^2\langle r',r'\rangle}=1,$$
where $\widetilde I$ stands for the element in $\mathrm{SO}^+(1,2)$ corresponding to $I$; it satisfies $\widetilde I(f)=f$ because $I$ is parabolic (see, for instance, \cite{charette}).
\end{proof}

\begin{wrapfigure}{r}{0pt}
\includegraphics[width=0.26\textwidth]{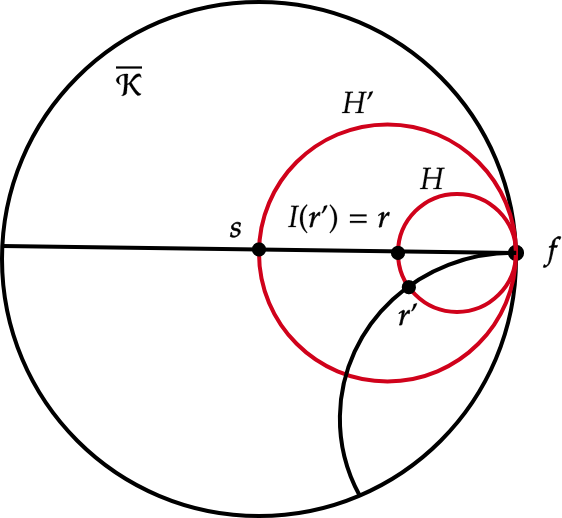}
\caption{Horocycles and the relativistic Doppler effect}
\end{wrapfigure}

Now, consider the case of two inertial observers $\pmb r,\pmb s \in \mathcal K$ which are respectively considered as the receiver and the source of a photon $\pmb f\in\partial\mathcal K$ such that $\pmb r,\pmb s,\pmb f$ are in a same geodesic $G$. Assume that the inertial observers are moving away from each other (it is easy to see that, in order to reach the receiver, the photon that has to be sent by the source is such that $\pmb r$ is in the geodesic segment joining $\pmb s$ and $\pmb f$). Let $\nu_{\pmb s}$ (respectively, $\nu_{\pmb r}$) be the frequency of $\pmb f$ as measured by $\pmb s$ (respectively, by~$\pmb r$). Then (see, for example, \cite[Section 4.3]{rindler})
$$\frac{\nu_{\pmb s}}{\nu_{\pmb r}} = \sqrt{\frac{1 + v/c}{1 - v/c}} = \sqrt{\frac{1 + (e^w - e^{-w})/(e^w + e^{-w})}{1 - (e^w - e^{-w})/(e^w + e^{-w})}}
=e^{w}=e^{d(\pmb r,\pmb s)},$$
\noindent where $v$ and $w$ are respectively the scalar relative velocity and relative rapidity between $\pmb r$ and $\pmb s$. When the inertial observers are moving towards each other (in
this case, the photon $\pmb f'$ to be sent corresponds to the other vertex of $G$) we have
$\nu_{\pmb s}/\nu_{\pmb r}=e^{-d(\pmb r,\pmb s)}$. We are now able to prove the following proposition (for the definition of Busemann function see, for instance, \cite[Section 1.2]{quint2006}).

\begin{prop}[ (relativistic Doppler effect)]\label{prop:Doppler}
Let\/ $\pmb p,\pmb q\in\mathcal K$ be inertial observers and let\/ $\pmb f\in\partial\mathcal K$ be a photon. Let $\nu_{\pmb p}$ and $\nu_{\pmb q}$ be respectively the frequencies of $\pmb f$ as measured by $\pmb p$ and $\pmb q$. We have
$$\frac{\nu_{\pmb p}}{\nu_{\pmb q}}=e^{b_{\pmb f}(\pmb p,\pmb q)},$$ where $b_{\pmb f}$ stands for the Busemann function determined by $\pmb f$.
\end{prop}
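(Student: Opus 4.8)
The plan is to reduce the arbitrary configuration $\pmb p,\pmb q,\pmb f$ to the special ``collinear'' case already treated just before the statement, in which the two observers and the photon share a common geodesic, and then to identify the resulting signed distance with the Busemann function. The two ingredients that make this reduction work are Lemma \ref{lemma:half_level_curves}, which says the measured frequency of $\pmb f$ is constant along any horocycle based at $\pmb f$, and the standard fact that the level sets of the Busemann function $b_{\pmb f}(-,\pmb q)$ are exactly those horocycles. Together they let me slide $\pmb q$ along its horocycle onto a convenient geodesic without changing either side of the claimed identity.

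Concretely, I would first let $G$ be the unique extended geodesic through $\pmb p$ having $\pmb f$ as a vertex, i.e.\ $\pmb f\in\partial G$. A horocycle based at $\pmb f$ is orthogonal to every geodesic ending at $\pmb f$, so it meets $G$ in a single point; call $\pmb q'$ the intersection of $G$ with the horocycle through $\pmb q$. Lemma \ref{lemma:half_level_curves} gives $\nu_{\pmb q}=\nu_{\pmb q'}$, and since $\pmb q,\pmb q'$ lie on a common level set of $b_{\pmb f}$ we also have $b_{\pmb f}(\pmb p,\pmb q)=b_{\pmb f}(\pmb p,\pmb q')$. Thus it suffices to prove the formula for $\pmb p$ and $\pmb q'$, which are now collinear with $\pmb f$ on $G$. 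For such collinear observers the computation preceding the Proposition yields $\nu_{\pmb p}/\nu_{\pmb q'}=e^{\pm d(\pmb p,\pmb q')}$, the sign being $+$ when $\pmb p$ is the farther of the two from $\pmb f$ along $G$ and $-$ otherwise.

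It then remains to recognize $\pm d(\pmb p,\pmb q')$ as $b_{\pmb f}(\pmb p,\pmb q')$. Parametrizing $G$ as a geodesic ray $\gamma(t)\to\pmb f$ and inserting it into the definition $b_{\pmb f}(\pmb p,\pmb q')=\lim_{t\to\infty}\big(d(\gamma(t),\pmb p)-d(\gamma(t),\pmb q')\big)$, the limit collapses to the signed distance $\pm d(\pmb p,\pmb q')$ along $G$, with precisely the sign dictated above. Chaining the three equalities gives $\nu_{\pmb p}/\nu_{\pmb q}=e^{b_{\pmb f}(\pmb p,\pmb q)}$.

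The main obstacle I expect is the sign bookkeeping: one must check that the vertex of $G$ toward which the photon actually travels in the pre-Proposition formula is the same $\pmb f$ normalizing the Busemann function, so that the $+/-$ cases of the second step match the orientation of the limit in the third. A clean way to settle this---and to bypass the geometric reduction altogether---is a direct computation. Normalizing representatives by $\langle p,p\rangle=\langle q,q\rangle=-1$, the proof of Lemma \ref{lemma:half_level_curves} gives $E_{\pmb p}^2\propto\langle f,p\rangle^2$, and since $\langle f,p\rangle,\langle f,q\rangle<0$ for future-oriented vectors this yields $\nu_{\pmb p}/\nu_{\pmb q}=\langle f,p\rangle/\langle f,q\rangle$; on the other hand the hyperboloid-model Busemann function is $b_{\pmb f}(\pmb p,\pmb q)=\log\big(\langle f,p\rangle/\langle f,q\rangle\big)$, so exponentiating the latter reproduces the ratio exactly and fixes every sign.
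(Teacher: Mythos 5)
Your argument is essentially the paper's: the published proof likewise invokes Lemma~\ref{lemma:half_level_curves} to reduce to the collinear configuration treated just before the statement, only stating this in two sentences and leaving implicit both the horocycle-sliding of $\pmb q$ onto the geodesic through $\pmb p$ with vertex $\pmb f$ and the identification of the resulting signed distance with $b_{\pmb f}(\pmb p,\pmb q)$, which you spell out correctly (including the sign check). The direct computation in your closing paragraph --- $\nu_{\pmb p}/\nu_{\pmb q}=\langle f,p\rangle/\langle f,q\rangle$ from the proof of the Lemma versus the hyperboloid-model formula $b_{\pmb f}(\pmb p,\pmb q)=\log\bigl(\langle f,p\rangle/\langle f,q\rangle\bigr)$ --- does not appear in the paper and is arguably the cleanest self-contained route, since it handles all configurations and all signs at once.
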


\begin{proof}
By Lemma \ref{lemma:half_level_curves}, the ratio $\nu_{\pmb p}/\nu_{\pmb q}$ can be obtained in terms of the distance between the horocycles $H,H'$ containing~$\pmb f$ and passing respectively through $\pmb p,\pmb q$. Now the proof follows from the case of collinear $\pmb p,\pmb q,\pmb f$ which was already considered above.
\end{proof}

A direct consequence of Lemma \ref{lemma:half_level_curves} and Proposition \ref{prop:Doppler} is the following Corollary.

\begin{cor}\label{cor:level_curves} Let\/ $\pmb f\in\partial\mathcal K$ and let\/ $\pmb p,\pmb q\in\mathcal K$ be inertial observers. Then\/ $\nu_{\pmb p}=\nu_{\pmb q}$ if and only if\/ $\pmb p,\pmb q$ belong to a same horocycle containing $\pmb f$, where\/ $\nu_{\pmb p},\nu_{\pmb q}$ stand for the frequencies of\/ $\pmb f$ as measured respectively by\/ $\pmb p,\pmb q$.
\end{cor}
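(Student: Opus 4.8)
The plan is to prove the corollary as an essentially immediate consequence of the two results it cites. The statement to establish is that, for a fixed photon $\pmb f\in\partial\mathcal K$ and inertial observers $\pmb p,\pmb q\in\mathcal K$, one has $\nu_{\pmb p}=\nu_{\pmb q}$ if and only if $\pmb p,\pmb q$ lie on a common horocycle through $\pmb f$. Since this is an ``if and only if'', I would prove the two implications separately, using Lemma~\ref{lemma:half_level_curves} for one direction and Proposition~\ref{prop:Doppler} for the other.

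For the reverse implication, suppose $\pmb p$ and $\pmb q$ belong to a common horocycle containing $\pmb f$. Then Lemma~\ref{lemma:half_level_curves}, applied with $\pmb r=\pmb p$ and $\pmb r'=\pmb q$, gives directly $\nu_{\pmb p}=\nu_{\pmb q}$. This direction requires nothing beyond invoking the lemma. For the forward implication, suppose $\nu_{\pmb p}=\nu_{\pmb q}$. By Proposition~\ref{prop:Doppler} we have $\nu_{\pmb p}/\nu_{\pmb q}=e^{b_{\pmb f}(\pmb p,\pmb q)}$, so the equality of frequencies forces $b_{\pmb f}(\pmb p,\pmb q)=0$. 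I would then recall that the level sets of the Busemann function $b_{\pmb f}$ are precisely the horocycles centred at $\pmb f$, so $b_{\pmb f}(\pmb p,\pmb q)=0$ means exactly that $\pmb p$ and $\pmb q$ lie on the same horocycle through $\pmb f$, as required.

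The only real content is the identification of the zero level set of the Busemann cocycle with a horocycle, and this is the standard defining property of the Busemann function (see the reference \cite{quint2006} already cited for horospheres); alternatively one can avoid even this by noting that Lemma~\ref{lemma:half_level_curves} already shows $\nu$ is constant on each horocycle through $\pmb f$, so the horocycles refine the level sets of $\nu$, and since through any point of $\mathcal K$ there passes exactly one horocycle tangent to $\partial\mathcal K$ at $\pmb f$, the level sets and the horocycles must coincide. I expect no substantive obstacle here: the corollary is genuinely a corollary, and the main point is simply to phrase the logical equivalence cleanly so that the ``constant along horocycles'' statement of the lemma and the ``strictly monotone across horocycles'' content of the proposition together pin down the level curves.

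Concretely, I would write: \emph{By Proposition~\ref{prop:Doppler}, $\nu_{\pmb p}=\nu_{\pmb q}$ if and only if $b_{\pmb f}(\pmb p,\pmb q)=0$. The latter holds precisely when $\pmb p$ and $\pmb q$ lie on a common level set of $b_{\pmb f}$, that is, on a common horocycle centred at $\pmb f$; the fact that these level sets are exactly such horocycles is guaranteed by Lemma~\ref{lemma:half_level_curves} together with the uniqueness of the horocycle through a given point tangent to $\partial\mathcal K$ at $\pmb f$.}
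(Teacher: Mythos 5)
Your proposal is correct and matches the paper's intent exactly: the paper gives no separate argument, stating only that the corollary is a direct consequence of Lemma~\ref{lemma:half_level_curves} (which yields the ``if'' direction) and Proposition~\ref{prop:Doppler} (whose Busemann-function formula yields the ``only if'' direction via the standard identification of horocycles with level sets of $b_{\pmb f}$). Your write-up simply makes these two implications explicit, which is precisely the reading the paper intends.
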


\subsection{Wigner rotation}

Let $\pmb p,\pmb q,\pmb r \in \mathcal{K}$ be inertial observers. A well-known fact in special relativity is that the composition of boosts $\mathbb Rp\to\mathbb Rq\to\mathbb Rr\to\mathbb Rp$ is a spatial rotation called the {\it Wigner rotation.} Let us give a coordinate-free proof of this phenomenon at the level of the kinematic space $\mathcal K$. In the next proposition we consider, without loss of generality, that $\dim\mathcal K=2$ and that the kinematic space is (arbitrarily) oriented.

\begin{prop}[ (Wigner Rotation)]\label{prop:wigner} Let $\pmb p_i \in \mathcal K$, $i=1,2,3$, be inertial observers and let $G_{ij}:=G{\wr}\pmb p_i,\pmb p_j{\wr}$ be the geodesic connecting $\pmb p_i$ and $\pmb p_j$. Let $h_1,h_2,H: \mathcal K \rightarrow \mathcal K$ stand for the hyperbolic isometries such that $h_1$ stabilizes $G_{12}$ and $h_1(\pmb p_1)=\pmb p_2$; $h_2$ stabilizes $G_{23}$ and $h_2(\pmb p_2)=\pmb p_3$; $H$ stabilizes $G_{13}$ and $H(\pmb p_1)=\pmb p_3$. Then $h_2 h_1 = e_\theta H$, where $e_\theta : \mathcal K \rightarrow \mathcal K$ is the elliptic isometry that fixes $\pmb p_3$ and whose angle of rotation $\theta \in [-\pi,\pi]$ is minus the oriented area of the triangle with vertices $(\pmb p_1,\pmb p_2,\pmb p_3)$.
\end{prop}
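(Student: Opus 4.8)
The plan is to reduce the statement to a purely algebraic fact about the restricted Lorentz group $\mathrm{SO}^+(1,2)$ and then to identify the rotation angle with an oriented area via the Gauss--Bonnet theorem. First I would observe that $h_2h_1$ and $H$ are both orientation-preserving isometries of $\mathcal K$, so $h_2h_1H^{-1}$ is again such an isometry. The key is to understand its fixed points: since $H(\pmb p_1)=\pmb p_3$ we have $H^{-1}(\pmb p_3)=\pmb p_1$, then $h_1(\pmb p_1)=\pmb p_2$ and $h_2(\pmb p_2)=\pmb p_3$, so the composition $h_2h_1H^{-1}$ fixes $\pmb p_3$. An orientation-preserving isometry of $\mathcal K$ fixing an interior point must be elliptic (a rotation about that point) or the identity; hence $h_2h_1H^{-1}=e_\theta$ is an elliptic isometry centred at $\pmb p_3$, which is exactly the claim $h_2h_1=e_\theta H$. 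This part is essentially formal once one invokes the classification of isometries recalled in the Isometries paragraph.

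The substantive content is the computation of the rotation angle $\theta$. Here I would pass to the level of Minkowski space, where each $h_i$ and $H$ is a boost, i.e.\ a hyperbolic element of $\mathrm{SO}^+(1,2)$ whose axis is the plane spanned by the corresponding pair of representatives. The composition of three boosts around a closed loop $\pmb p_1\to\pmb p_2\to\pmb p_3\to\pmb p_1$ is an elliptic element, and the relation between the two sides is a statement about the holonomy of the Levi-Civita connection around the geodesic triangle $\triangle(\pmb p_1,\pmb p_2,\pmb p_3)$. I would make this precise by interpreting $e_\theta$ as the parallel-transport holonomy around the triangle: transporting a tangent vector around the three geodesic sides $G_{12},G_{23},G_{13}$ returns it rotated by an angle equal to the holonomy. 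The boost $h_1$ maps $\pmb p_1$ to $\pmb p_2$ while carrying the geodesic frame along $G_{12}$, and composing the three boosts realizes exactly the parallel transport around the closed geodesic triangle based at $\pmb p_3$.

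The final and decisive step is the Gauss--Bonnet theorem for the hyperbolic plane. For a geodesic triangle in a surface of constant curvature $-1$, the holonomy (deficit) angle of parallel transport around its boundary equals $-\iint_\triangle K\,dA=\mathrm{Area}(\triangle)$ in absolute value, with the sign fixed by the chosen orientation. Tracking orientations carefully gives $\theta=-\operatorname{Area}(\triangle(\pmb p_1,\pmb p_2,\pmb p_3))$, the oriented area, which is precisely the asserted formula. The main obstacle I anticipate is the bookkeeping of signs and the precise identification of the composition of boosts with the holonomy operator: one must verify that $h_2h_1H^{-1}$, rather than some other ordering or inverse, corresponds to transport around the positively oriented boundary, and that the curvature normalization $-1$ (as opposed to $-1/c^2$ for $\mathcal K^c$) is the one in play for $\mathcal K$. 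Once the holonomy is correctly matched to the oriented boundary, the angle-area identity and hence the sign of $\theta$ follow immediately from Gauss--Bonnet.
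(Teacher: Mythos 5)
Your proof is correct, and its decisive step --- identifying the differential of $h_2h_1H^{-1}$ at $\pmb p_3$ with the parallel-transport holonomy around the geodesic triangle and invoking Gauss--Bonnet --- is exactly the argument the paper uses, resting on the same (stated, not proved) fact that the differential of a hyperbolic isometry at a point of its axis coincides with parallel transport along that axis. Where you diverge is in the structural first step: you observe directly that $h_2h_1H^{-1}$ fixes $\pmb p_3$ and conclude from the classification of orientation-preserving isometries that it is elliptic (or the identity), whereas the paper first decomposes each of $h_1,h_2$ as a product of reflections in geodesics orthogonal to its axis and shows, via the cancellation $(\sigma_1\sigma_3)(\sigma_3\sigma_2)(\sigma_2\sigma_1)=1$, that $h_2h_1$ is itself a hyperbolic isometry whose axis is the geodesic through the midpoints $\pmb q_1,\pmb q_2$ of the segments $\pmb p_1\pmb p_2$ and $\pmb p_2\pmb p_3$, and only then applies the fixed-point observation to $h_3H^{-1}$. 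Your route is shorter and entirely sufficient for the statement; the paper's reflection decomposition buys extra geometric information (the explicit axis of the composite boost, and a direct way to see that $h_2h_1H^{-1}$ is genuinely elliptic rather than the identity) that your argument does not provide. The only loose end on your side is the degenerate case $h_2h_1=H$, which occurs exactly when the three observers are collinear; there $\theta=0$ and the area vanishes, so the formula still holds, but you should say so rather than call the composite ``elliptic'' unconditionally. Your anticipated sign bookkeeping is real but no worse than in the paper, which likewise settles it by appealing to the angle defect $\pi-\sum_i\alpha_i=\mathrm{Area}(\pmb p_1,\pmb p_2,\pmb p_3)$.
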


\begin{proof} Let $\pmb q_1\in G_{12}$ be the middle point of the geodesic segment joining $\pmb p_1,\pmb p_2$ and let $\pmb q_2\in G_{23}$ the middle point of the geodesic segment joining $\pmb p_2,\pmb p_3$. We have $h_1 = R_2 R_1$ where $R_1$ stands for the reflection in the geodesic orthogonal to $G_{12}$ passing through $\pmb q_1$ and $R_2$, for the reflection in the geodesic orthogonal to $G_{12}$ passing through $\pmb p_2$. Similarly, $h_2 = R_4 R_3$ where $R_3$ denotes the reflection in the geodesic orthogonal to $G_{23}$ passing through $\pmb p_2$ and $R_4$, the reflection in the geodesic orthogonal to $G_{23}$ passing through $\pmb q_2$. Lastly, let $R_5$ and $R_6$ be the reflections in the geodesics orthogonal to $G{\wr}\pmb q_1,\pmb q_2{\wr}$ passing respectively through $\pmb q_2$ and $\pmb q_1$ and let $h_3:\mathcal K \rightarrow \mathcal K$, $h_3 := R_5 R_6$, be a hyperbolic isometry that stabilizes the geodesic $G{\wr}\pmb q_1,\pmb q_2{\wr}$.

Note that $R_1 R_6$, $R_3 R_2$, and $R_5 R_4$ are elliptic isometries such that $R_1R_6=\sigma_2\sigma_1$, $R_3 R_2=\sigma_3\sigma_2$, and $R_5 R_4=\sigma_1\sigma_3$, where $\sigma_1,\sigma_2,\sigma_3$ stand respectively for the reflections in the geodesics $G{\wr}\pmb q_1,\pmb q_2{\wr}$, $G{\wr}\pmb q_1,\pmb p_2{\wr}$, and $G{\wr}\pmb p_2,\pmb q_2{\wr}$. Hence,
$$R_5 h_2 h_1 R_6 = (R_5 R_4) (R_3 R_2) (R_1 R_6) = (\sigma_1\sigma_3)(\sigma_3\sigma_2)(\sigma_2\sigma_1)=1$$
which implies $h_2 h_1 = R_5 R_6 = h_3$. Now, note that $h_3H^{-1}(\pmb p_3)=h_3(\pmb p_1)=h_2\big(h_1(\pmb p_1)\big)=\pmb p_3$, and $h_3$ is obviously not the inverse of $H$, so $h_3 H^{-1}$ has to be an elliptic isometry $e_{\theta}$ fixing $\pmb p_3$. In other words, $h_2h_1=e_{\theta}H$.

\begin{figure}[htp]
\centering
\includegraphics[width=.6\textwidth]{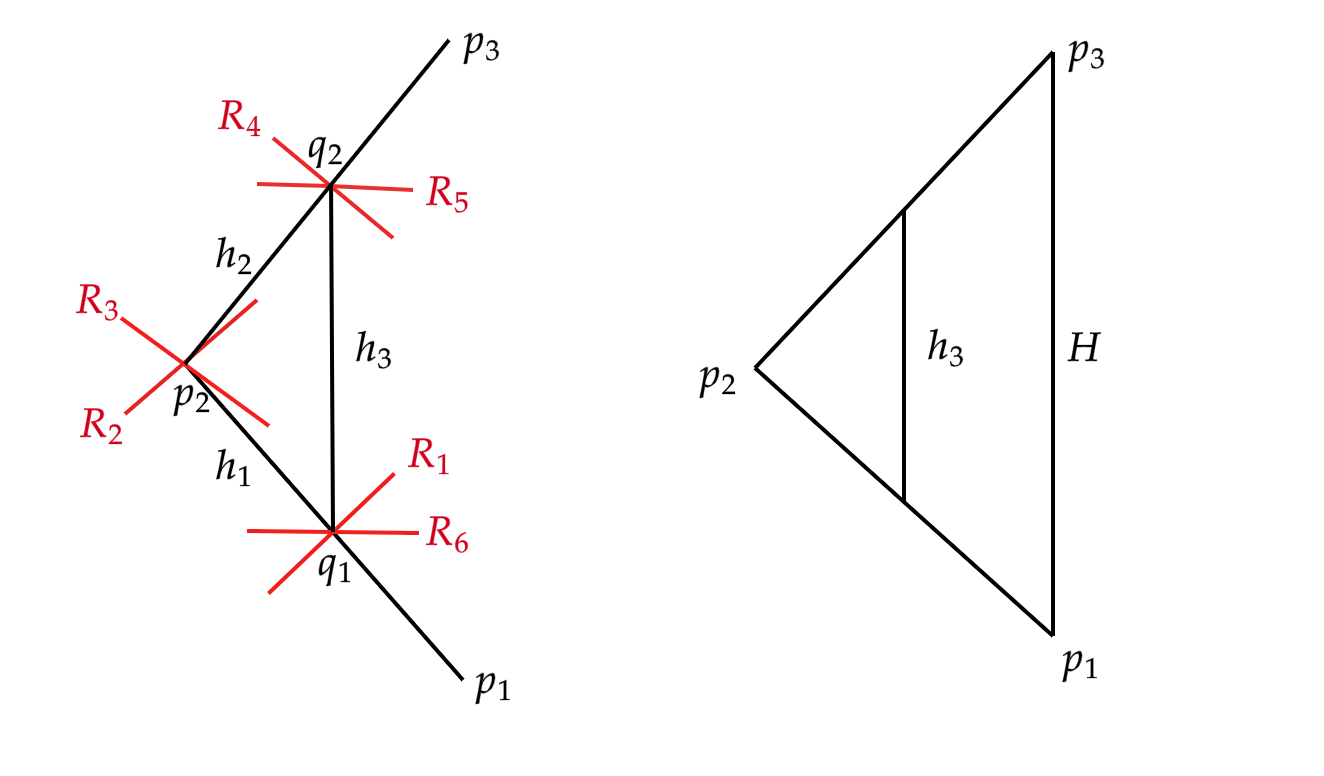}
\caption{Proof of Proposition \ref{prop:wigner}}
\label{fig:wignerproof}
\end{figure}

The differential of a hyperbolic isometry, being applied to a vector tangent at a point of its stable geodesic, coincides with the parallel transport along this geodesic. So, since $h_2 h_1 H^{-1} = e_\theta$, we conclude that $\theta$ is minus the oriented area of the triangle $(\pmb p_1,\pmb p_2,\pmb p_3)$ (the minus sign comes from the fact that the sum of the internal angles of a geodesic triangle in $\mathcal K$ is less than $\pi$ or, equivalently, from the Gauss-Bonnet theorem).
\end{proof}

\begin{rmk}
Wigner rotation can also be seen as a measure of the non-commutativity of the rapidity addition (see Definition \ref{defi:sum_of_rapidities}) as follows. Let $w_1,w_2\in\mathrm{T}_{\pmb p_1}\mathcal K$ be rapidities at $\pmb p_1\in\mathcal K$. Moreover, define $\pmb p_2:=\mathrm{exp}_{\pmb p_1}w_1$, $\pmb p_3 := \text{exp}_{\pmb p_1} (w_1 \oplus w_2)$, $\pmb q_2 := \text{exp}_{\pmb p_1} w_2$, and $\pmb q_3 := \text{exp}_{\pmb p_1} (w_2 \oplus w_1)$. The triangles $(\pmb p_1,\pmb p_2,\pmb p_3)$ and $(\pmb q_1,\pmb q_2,\pmb q_3)$ are clearly congruent and it is straightforward to see that the angle $\theta$ at $\pmb p_1$ between the geodesic ray joining $\pmb p_1,\pmb p_3$ and the geodesic ray joining $\pmb p_1,\pmb q_3$ is given by $\theta=\pi - \sum_i \alpha_i=\mathrm{Area}(\pmb p_1,\pmb p_2,\pmb p_3)$, where the $\alpha_i$'s stand for the internal angles of the triangle $(\pmb p_1,\pmb p_2,\pmb p_3)$.
\end{rmk}

\subsection{An invariant of three points and causality}\label{sub:causality}

Let us take a look at a relativistic interpretation of the algebraic invariant
\begin{equation}\label{eq:invariant_tau}
\eta(\pmb p,\pmb q,\pmb u):=\frac{\langle u,p\rangle\langle p,q\rangle\langle q,u\rangle}{\langle p,p\rangle\langle q,q\rangle\langle u,u\rangle}
\end{equation}
of two inertial observers $\pmb p,\pmb q\in\mathcal K$ and a point $\pmb u\in\mathcal G$ in de Sitter space.

The invariant $\eta(\pmb p,\pmb q,\pmb u)$ determines whether $\pmb p$ and $\pmb q$ agree or disagree on the order of occurrence of an event that happened at time $t=0$ and a space-like event $u\in\mathbb M^{n+1}$. Indeed, the observers agree or disagree respectively when the sign of
$$\frac{\big\langle\pi'[\pmb p]u,\pi'[\pmb q]u\big\rangle}{\langle u,u\rangle}=\frac{\Big\langle\frac{\langle u,p\rangle}{\langle p,p\rangle}p,
\frac{\langle u,q\rangle}{\langle q,q\rangle}q\Big\rangle}{\langle u,u\rangle}=
\eta(\pmb p,\pmb q,\pmb u)$$
is negative or positive. At the level of the extended kinematic space, this can be translated as follows: the observers agree/disagree exactly when $\pmb p,\pmb q$ lie in the same/in distinct components of $\mathcal K\setminus G$, where $G$ is the geodesic with polar point~$u$ (this can be inferred by looking at the relative position between $\mathbb Rp$, $\mathbb Rq$, and $u^\perp$). A usual way of saying that there will always exist observers that do not agree on the occurrence order of spacelike separated events is that causality is not well defined for this kind of events.

\subsection{Dynamics}\label{sub:dynamics}

At a first glance it may seem that, when passing from Minkowski space to kinematic space, one loses information, obtaining a space that models well kinematic phenomena but is not suited to described dynamics. This subsection is intended to illustrate that this is not the case.

Let $\xi:I\to\mathbb M^{n+1}$ be a smooth curve such that $\xi(0)=0$, $\big\langle\dot\xi(\tau),\dot\xi(\tau)\big\rangle=-c^2$ (that is, $\xi$ is parameterized by proper time), and $\dot\xi(\tau)$ is future-oriented for every $\tau\in I$. It gives rise to the curve $\zeta(\tau)=\mathbb P_\mathbb R\dot\xi(\tau)$ in the scaled kinematic space $\mathcal K^c$, where $\mathbb P_\mathbb R\dot\xi(\tau)$ stands for the image of $\dot\xi(\tau)$ under the canonical projection $\mathbb M^{n+1}\to\mathbb P^n_\mathbb R$. Conversely, given a smooth curve $\zeta:I\to\mathcal K^c$, there exists a unique lift $\zeta_0:I\to\mathbb M^{n+1}$ of $\zeta$ to $\mathbb M^{n+1}$ such that $\big\langle\zeta_0(\tau),\zeta_0(\tau)\big\rangle=-c^2$ and $\zeta_0(\tau)$ is future-oriented for every $\tau\in I$. Now, there exists a unique smooth curve $\xi:I\to\mathbb M^{n+1}$ such that $\xi(0)=0$ and $\dot\xi(\tau)=\zeta_0(\tau)$ for every $\tau\in I$.

Let us see that a tangent vector to the curve $\zeta$ is nothing but the $(n+1)$-acceleration of $\xi$ in view of the identification $\T_{\zeta(\tau)}\mathcal K^c\simeq\zeta(\tau)^\perp$ (see Remark \ref{rmk:natural_identification}). On one hand, as a linear map $\T_{\zeta(\tau)}\mathcal K^c=\mathrm{Lin\big(\mathbb R\zeta(\tau),\zeta(\tau)^\perp\big)}$,
$$\dot\zeta(\tau):\zeta_0(\tau)\mapsto\pi\big[\zeta(\tau)\big]\dot\zeta_0(\tau)=\pi\big[\zeta(\tau)\big]\Ddot{\xi}(\tau)$$
by \cite[Lemma A.1]{AGG2011}. On the other hand, $\pi\big[\zeta(\tau)\big]\Ddot{\xi}(\tau)=\Ddot{\xi}(\tau)$ since $\big\langle\dot\xi(\tau),\dot\xi(\tau)\big\rangle$ is constant.

The curve $\zeta$ can be interpreted as the list of inertial frames occupied by the observer with worldline $\xi$ (that is, $\zeta(\tau)$ is the inertial frame occupied at the instant $\tau$). Note that, if $\zeta$ is constant, $\zeta(\tau)=\pmb p$ for every $\tau$, then $\xi$ is a straight line in $\mathbb M^{n+1}$ passing through the origin (the worldline $\mathbb Rp$ of an inertial observer, as expected); when $\zeta$ is a geodesic, $\xi$ is a hyperbola that represents a motion with constant $(n+1)$-acceleration (a.k.a. hyperbolic motion).

Finally, let $A=A(\pmb p,\tau)$, $\pmb p\in\mathcal K$, $\tau\in\mathbb R$, be a smooth time-dependent vector field in $\mathcal K$. Let $\zeta$ be the maximal integral curve of $A$ corresponding to the initial conditions $\pmb p_0\in\mathcal K$ and $\tau_0\in\mathbb R$, that is, $\dot\zeta(\tau)=A\big(\zeta(\tau),\tau\big)$ and $\zeta(\tau_0)=\pmb p_0$ (such an integral curve exists and is unique by \cite[Theorem 9.48]{leemanifolds}). The $\xi$ obtained from $\zeta$ as above is nothing but the dynamics associated to the time-dependent force field $F=mA$, where $m$ is the rest mass of an observer whose worldline is $\xi$.

\bibliographystyle{plain}
\bibliography{references}

\bigskip

\noindent
{\sc Rafael Ferreira}

\noindent
{\sc Departamento de Matem\'atica, ICMC, Universidade de S\~ao Paulo, S\~ao Carlos, Brazil}

\noindent
rafael.ferreira.pereira@usp.br

\medskip

\noindent
{\sc João dos Reis Junior}

\noindent
{\sc Departamento de Matem\'atica, ICMC, Universidade de S\~ao Paulo, S\~ao Carlos, Brazil}

\noindent
joao.reis.reis@usp.br

\medskip

\noindent
{\sc Carlos H.~Grossi}

\noindent
{\sc Departamento de Matem\'atica, ICMC, Universidade de S\~ao Paulo, S\~ao Carlos, Brazil}

\noindent
grossi@icmc.usp.br

\end{document}